\definecolor{mygrey}{gray}{0.35}
\definecolor{myblue}{rgb}{0.2,0.2,0.8}
\definecolor{myzard}{cmyk}{0,0,0.05,0}
\definecolor{mywhite}{rgb}{1,1,1}
\definecolor{myred}{rgb}{0.9,0.1,0.}
\newtheorem{theorem}{Theorem}
\newtheorem{lemma}[theorem]{Lemma}
\newtheorem{remark}[theorem]{Remark}
\newtheorem{proposition}[theorem]{Proposition}
\newenvironment{proof}{\medskip\noindent\textbf{Proof.}}{\hfill$\blacksquare$\medskip}
\newenvironment{proof-of}[1]{\medskip\noindent\textbf{Proof of {#1}.}}{\hfill$\blacksquare$\medskip}
\newcommand{\norm}[1]{\left\lVert#1\right\rVert}
\newcommand{\ket}[1]{\vert #1 \rangle} 
\newcommand{\bra}[1]{\langle #1 \vert} 
\newcommand{\dketbra}[1]{\ensuremath{| #1 \rangle\!\langle #1 |}}
\newcommand{\braket}[2]{\ensuremath{\langle #1 | #2 \rangle}}
\newcommand{\ketbra}[2]{\ensuremath{| #1 \rangle\!\langle #2 |}}
\newcommand{\be}{\begin{equation}} 							
\newcommand{\ee}{\end{equation}}
\newcommand{\bea}{\begin{eqnarray}}
\newcommand{\eea}{\end{eqnarray}}
\newcommand{\bematrix}{\left(\begin{matrix}}
\newcommand{\ematrix}{\end{matrix}\right)}
\begin{document}

\title[ATI]{The role of coherence theory in attractor quantum neural networks}

\author{Carlo Marconi}
\affiliation{F\'{\i}sica Te\`{o}rica: Informaci\'{o} i Fen\`{o}mens Qu\`{a}ntics, %
	Departament de F\'{\i}sica, Universitat Aut\`{o}noma de Barcelona, 08193 Bellaterra, Spain}
	
	\author{Pau Colomer Saus}
\affiliation{F\'{\i}sica Te\`{o}rica: Informaci\'{o} i Fen\`{o}mens Qu\`{a}ntics, %
	Departament de F\'{\i}sica, Universitat Aut\`{o}noma de Barcelona, 08193 Bellaterra, Spain}

\author{Mar\'ia Garc\'ia D\'iaz}
\affiliation{F\'{\i}sica Te\`{o}rica: Informaci\'{o} i Fen\`{o}mens Qu\`{a}ntics, %
	Departament de F\'{\i}sica, Universitat Aut\`{o}noma de Barcelona, 08193 Bellaterra, Spain}

\author{Anna Sanpera}
\affiliation{F\'{\i}sica Te\`{o}rica: Informaci\'{o} i Fen\`{o}mens Qu\`{a}ntics, %
	Departament de F\'{\i}sica, Universitat Aut\`{o}noma de Barcelona, 08193 Bellaterra, Spain}
\affiliation{ICREA, Pg. Llu\'is Companys 23, 08010 Barcelona, Spain}
\date{30/06/2022}

\begin{abstract}
\noindent We investigate attractor quantum neural networks (aQNNs) within the framework of coherence theory. We show that: i) aQNNs are associated to non-coherence-generating quantum channels; ii) the depth of the network is given by the decohering power of the corresponding quantum map; and iii) the attractor associated to an arbitrary input state is the one minimizing their relative entropy. Further, we examine faulty aQNNs described by noisy quantum channels, derive their physical implementation and analyze under which conditions their performance can be enhanced by using entanglement or coherence as external resources.
\end{abstract}

\maketitle

\noindent Quantum neural networks (QNNs) stem from adding quantum features like correlations, entanglement and superposition to the parallel processing properties of classical neural networks (NNs), an approach which is expected to result in an enhancement of their performances \cite{riste2017demonstration,carleo2019machine,liu2021rigorous}. Trying to implement neural computing (deep learning) with quantum computers results, generically, in an incompatibility, since the dynamics of the former is nonlinear and dissipative, while the latter's is linear and unitary (dissipation can only be introduced by measurements). Nevertheless, a set of desirable properties for QNNs displaying associative memory has been recently proposed \cite{Schuld2014}: i) QNNs should produce an output state which is the closest to the input state in terms of some distance measure; ii) QNNs should encompass neural computing mechanisms such as training rules or attractor dynamics; and iii) the evolution of QNNs should be based on quantum effects.     

Attractor neural networks (aNNs) are a particular class of NNs. They are implemented by a collection of $n$ interacting nodes (artificial neurons) that dynamically evolve towards one of the states of minimal energy of the system \cite{amit1989modeling}. Such metastable states are dubbed \emph{attractors} or \emph{patterns}. Attractor neural networks are used to model associative memory, that is, the capability to retrieve, out of a set of stored patterns, the state which is the closest to a noisy input according to the Hamming distance. Clearly, the larger the number of the attractors, the greater the associative memory, i.e., the storage capacity of the aNN. A paradigmatic example of an aNN is the Hopfield model \cite{hopfield1982}, consisting of a single layer of $n$ artificial neurons, represented by a set of binary variables $\{x_i\}_{i=1}^n$, $x_{i} \in \{\pm 1\}$, which interact pairwise according to a spin glass Hamiltonian.

Desirably, the quantum analogue of aNNs, which we will denote aQNNs, should meet the requirements stated above. Thus, classical bits are here replaced by qubits which evolve under the action of a completely positive and trace-preserving (CPTP) map.
The storage capacity of an aQNN then corresponds to the maximum number of stationary states of such map. An exponential increase of the storage memory of an aQNN, with respect to its classical counterpart, was already shown in \cite{ventura1998} by means of quantum search algorithms. Also, in \cite{rebentrost2018quantum}, the same result was recovered by using a feed-forward interpretation of the quantum Hopfield neural networks (for a recent development of this model see \cite{meinhardt2020quantum,cao2017quantum}). More recently, in \cite{Lewenstein2020}, the explicit form of the CPTP maps possessing the maximal number of stationary states was derived. 

Interestingly, such CPTP maps correspond to non-coherence-generating operations. Such observation motivates our choice of addressing aQNNs from a coherence-theoretic approach. Within this framework, we characterize the properties of aQNNs, such as their physical implementation and the depth of the network, that is, the number of times the map has to be applied to retrieve faithfully the state which is closest to the initial input. We then focus on the realistic scenario of faulty aQNNs, i.e., the case when some error in the realization of the network is taken into account. 

In what follows we briefly review some basic notions regarding both aQNNs and the resource theory of coherence. In Section \ref{results1}, we present our results regarding error-free aQNNs. We show that the evolution of aQNNs with maximal storage capacity is described by a genuinely incoherent operation (GIO). Besides, we demonstrate that, for such aQNNs, the equivalent of the Hamming distance is the quantum relative entropy. 
After deriving their physical implementation, we define their depth and establish a relation to the concept of decohering power. Further, we show that, in the case of noiseless aQNNs, neither coherence nor entanglement can be exploited as resources to enhance their performance. In Section \ref{results2}, we address the above issues in the context of faulty aQNNs, i.e., when we consider the presence of some source of error in the CPTP maps that implement the aQNNs. In this case, we demonstrate that the corresponding aQNNs are described either by strictly incoherent operations (SIOs), or by maximally incoherent operations (MIOs), thus opening the possibility, in the latter case, to an enhancement of their performance by using coherence as an external resource.

\section{Basic concepts}

\subsection{Attractor quantum neural networks}\label{intro_aqnns}

An attractor QNN of the Hopfield type consists of a network of $n$ $d$-dimensional artificial neurons (qudits) which evolve under a quantum channel, i.e., a non-trivial
CPTP map $\Lambda: \mathcal{B}(\mathcal{H}_{in})\rightarrow \mathcal{B}(\mathcal{H}_{out})$. The stored memories correspond to the stationary states of the map, that is, the states $\rho_S$ such that $\Lambda(\rho_S)=\rho_S$ \cite{Lewenstein2020}. For an arbitrary input state $\rho\neq\rho_S$, the successive applications of the map will bring the state to one of the stationary states of the map $\rho_S$.  In what follows, we restrict to the case where $\dim(\mathcal{H}_{in}) = \dim(\mathcal{H}_{out}) = N=d^n$.  As demonstrated in \cite{Lewenstein2020, Lewenstein2022_corr}, a non-trivial CPTP map can have up to $N$ stationary states  $\Lambda(\ketbra{\mu}{\mu})=\ketbra{\mu}{\mu}$, where $\{\ket{\mu}\}_{\mu=0}^{N-1}$ forms an orthonormal basis of $\mathcal{H}_{in}$. Such a map has the form of a generalized decohering map, i.e.,
\begin{align}
\label{lam}
\Lambda(\rho)=\sum_{\mu=0}^{N-1} \rho_{\mu \mu}\ketbra{\mu}{\mu}+ \underset{(\mu < \nu)}{\sum_{\mu, \nu}^{N-1}} \Big[ \rho_{\mu \nu} (1+\alpha_{\mu \nu}) \ketbra{\mu}{\nu} + \mbox{h.c.} \Big]~,
\end{align}
\noindent where $\alpha_{\mu \nu} \in \mathbb{C}$.

To determine the complete-positivity of the map, it is easier to work in terms of its Choi state,  $J_{\Lambda} \in \mathcal{B}(\mathcal{H}_{in} \otimes \mathcal{H}_{out})$, obtained by means of the Choi-Jamio\l kowski isomorphism, so that $\Lambda$ is CPTP  iff $J_{\Lambda} \geq 0$ and $\mbox{Tr}_{out}(J_{\Lambda})= \mathds{1}_{in}$, where $\mbox{Tr}_{out}$ denotes the partial trace over the subsystem $\mathcal{H}_{out}$. The Choi state of the map of Eq.(\ref{lam}) reads 
\begin{equation}
J_\Lambda=\sum_{\mu=0}^{N-1} \ketbra{\mu \mu}{\mu \mu}+\underset{(\mu < \nu)}{\sum_{\mu, \nu}^{N-1}} \Big[(1+\alpha_{\mu \nu}) \ketbra{\mu \mu}{\nu \nu} + \mbox{h.c.} \Big].
\end{equation}
The positivity requirement, $J_\Lambda\geq 0$, demands that $|1+\alpha_{\mu \nu}|^2 \leq 1~, \forall \alpha_{\mu \nu}$  (and $\alpha_{\mu \mu}=0 ~ \forall \mu$), as well as the positivity of all minors of $|J_{\Lambda}|$, which can be checked by, e.g., the Sylvester's condition \cite{Lewenstein2020}. In the most general case checking positivity is evidently hard, but here we simplify our analysis by restricting to the particular cases where $\alpha_{\mu \nu}=\alpha_{\nu \mu} = \alpha \in \mathbb{R}$ for every $\mu \neq \nu$. Upon this requirement, we find that $\Lambda$ is CPTP whenever $\alpha \in [-N/(N-1),0]$. We remark that our results are, nevertheless, general and apply also when this restriction is lifted as long as the map $\Lambda$ is CPTP.
\noindent Throughout this work, we will only consider aQNNs with maximal storage capacity, that is, those whose evolution is given by Eq.(\ref{lam}). With an abuse of language, we will sometimes refer to the map $\Lambda$ in Eq.(\ref{lam}) metonymically as aQNN.

\subsection{Coherence theory}
\noindent In any resource theory, one should firstly introduce the sets of free states and free operations. Given a Hilbert space $\mathcal{H}$ of dimension $N$, we denote by $\mathcal{B}(\mathcal{H})$ the set of the bounded operators acting on $\mathcal{H}$. 
The set of free states in the resource theory of coherence, denoted as $\mathbb{I}$, comprises the so-called \emph{incoherent states}, that is, all the states $\delta\in\mathcal{B}(\mathcal{H})$ that are diagonal in a fixed basis  $\{\ket{i}\}_{i=0}^{N-1}$ of $\mathcal{H}$, i.e., \mbox{$\mathbb{I}=\left\{ \delta = \sum_i \delta_i\dketbra{i} \mid \sum_i \delta_i=1\right\}$}. Free operations are the CPTP maps, $\mathcal{E}$, that leave incoherent states incoherent, i.e., $\mathcal{E}(\mathbb{I})\subset \mathbb{I}$. Stated differently, $\mathcal{E}$ fulfills $\Delta \circ \mathcal{E} \circ \Delta =\mathcal{E} \circ \Delta$, where $\circ$ denotes the composition between two maps and $\Delta$ is the complete-dephasing map in the chosen basis, i.e., $\Delta(\cdot)=\sum_i \ketbra{i}{i}\cdot \ketbra{i}{i}$ \cite{liu2017}. 
Operations satisfying the above relation are said to be \emph{non-coherence-generating}, since they are unable to create coherence on any incoherent state. In contrast to what happens in the resource theories of asymmetry, athermality or entanglement \cite{chitambar2016}, in coherence theory the set of free operations is not unique. This can be grasped by looking at the Kraus structure of the corresponding CPTP maps ($\mathcal E(\cdot)=\sum_{\alpha} K_{\alpha} \cdot K_{\alpha}^{\dagger}$). 

The largest class of non-coherence-generating operations are the \emph{maximally incoherent operations} (MIOs), whose Kraus operators, $\{K_\alpha\}$, fulfill $\sum_{\alpha} K_{\alpha} \mathbb I K_{\alpha}^{\dagger}\subset\mathbb I$ \cite{aberg2006}. A subset of MIOs are the \emph{incoherent operations} (IOs) \cite{baumgratz2014}, consisting of all MIOs whose Kraus operators satisfy the relation $K_\alpha \mathbb{I} K_\alpha ^\dagger \subset \mathbb{I}$ 
for all $\alpha$. Inside the set of IOs we find the \emph{strictly incoherent operations} (SIOs) \cite{winter2016}, for which the Kraus operators further fulfill that $K_\alpha^\dagger \mathbb{I} K_\alpha  \subset \mathbb{I}$ for all $\alpha$. Finally, \emph{genuinely incoherent operations} (GIOs) \cite{DeVicente2017} are SIOs preserving every incoherent state, i.e., $\mathcal{E}_{\text{GIO}}(\delta)=\delta$ for all $\delta\in \mathbb{I}$. As a consequence, GIOs 
present diagonal Kraus operators.

Besides free states and free operations, one should also introduce a proper measure of the resource considered. To quantify the amount of coherence present in an arbitrary state $\rho\in\mathcal{B(H)}$, a  \emph{coherence measure} \cite{baumgratz2014} must be defined as a functional $C:\mathcal{B(H)}\to\mathbb{R}_{\geq0}$ satisfying two main conditions: (i) faithfulness, meaning that $C(\delta)=0$ for all incoherent $\delta\in\mathbb{I}$, 
and (ii) monotonicity, i.e., $C(\rho)\geq C(\mathcal{E}(\rho))$, for all non-coherence-generating operations $\mathcal{E}$.  Among the most typical coherence measures we find the robustness of coherence \cite{napoli2016}, the relative entropy of 
coherence, i.e.,
\begin{equation}
    C_{r.e.}(\rho)=S(\Delta(\rho))-S(\rho),
\end{equation}
where $S(\rho)=-\mbox{Tr} (\rho\log\rho)$ is the Von Neumann entropy \cite{baumgratz2014}, and the $l_1$-coherence measure,
\begin{equation}
    C_{l_1}(\rho)=\sum_{\mu \neq \nu}|\rho_{\mu \nu}|,
\end{equation}  
which is a valid measure under IOs, but not MIOs \cite{bu2017}. 

Finally, every coherence measure achieves its maximum value on the set of \emph{maximally coherent states} ($S_{\text{MCS}}$), defined, in dimension $N$, as $S_{MCS}:=\{ \frac{1}{\sqrt{N}}\sum_{j=0}^{N-1}e^{i\theta_j} \ket{j} \mid \theta_j\in [0,2\pi)~ \forall j\}$ \cite{peng2016}.

\section{Error-free aQNNs}\label{results1}
\subsection{aQNNs as non-coherence-generating operations}\label{ncg}

A direct inspection of Eq.(\ref{lam}) shows that, in the basis $\{\ket{\mu}\}_{\mu=0}^{N-1}$, the condition $\Delta \circ \Lambda \circ \Delta =\Lambda \circ \Delta$ holds, implying that aQNNs are not able to generate coherence on any input state. In particular, since $\Lambda(\delta)=\delta$ for all $\delta \in \mathbb{I}$, it follows that the set of attractors of the aQNN is equivalent to the set of incoherent states $\mathbb{I}$, and that:
\begin{remark}
aQNNs are described by GIOs.
\end{remark}
As stated in the introduction, this observation justifies addressing aQNNs from a coherence-theoretic perspective. Moreover, analogously to the case of aNNs, aQNNs are {\it{bona fide}} models for associative memory. Indeed, in the asymptotic limit, they are able to retrieve the stored attractor which is closest to the input state, in terms of their relative entropy. We show this fact in the following lemma: 

\begin{lemma}
After $r\rightarrow \infty$ iterations, an aQNN outputs the stored attractor that minimizes the relative entropy with respect to the input state $\rho$, i.e., $S(\rho ||\lim_{r\rightarrow \infty}\Lambda^r(\rho))=\min_{\delta\in {\mathbb I}} S(\rho || \delta)$, where $S(\rho||\sigma)=\mbox{Tr}(\rho \log \rho)-\mbox{Tr}(\rho \log \sigma)$ is the quantum relative entropy fo $\rho$ with respect to $\sigma$. Equivalently, $C_{r.e.}(\rho)$ quantifies the minimum relative entropy between $\rho$ and the set of the attractors of the aQNN. 
\end{lemma}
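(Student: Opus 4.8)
The plan is to split the claim into two essentially independent parts: (i) compute the limit operator $\lim_{r\to\infty}\Lambda^r(\rho)$ explicitly, and (ii) recognise the resulting minimization as the standard relative-entropy-of-coherence problem, whose optimum is already $C_{r.e.}(\rho)$.

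For part (i) I would simply iterate Eq.(\ref{lam}) in the basis $\{\ket{\mu}\}$. Since $\Lambda$ preserves every diagonal entry and multiplies the $(\mu,\nu)$ off-diagonal entry by the scalar $1+\alpha_{\mu\nu}$, an immediate induction gives $[\Lambda^r(\rho)]_{\mu\mu}=\rho_{\mu\mu}$ and $[\Lambda^r(\rho)]_{\mu\nu}=(1+\alpha_{\mu\nu})^r\rho_{\mu\nu}$ for $\mu\ne\nu$. The complete-positivity bound $|1+\alpha_{\mu\nu}|\le 1$ established in Section \ref{intro_aqnns} means that, away from the boundary, $|1+\alpha_{\mu\nu}|<1$, so every off-diagonal entry decays geometrically and $\lim_{r\to\infty}\Lambda^r(\rho)=\sum_\mu\rho_{\mu\mu}\ketbra{\mu}{\mu}=\Delta(\rho)$, which lies in $\mathbb{I}$, i.e., it is one of the stored attractors.

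For part (ii) I would invoke the Pythagorean-type identity of the relative entropy with respect to the complete-dephasing channel. For any $\delta=\sum_i\delta_i\ketbra{i}{i}\in\mathbb{I}$, diagonality of $\log\delta$ gives $\mbox{Tr}(\rho\log\delta)=\mbox{Tr}(\Delta(\rho)\log\delta)$, and since likewise $\mbox{Tr}(\rho\log\Delta(\rho))=\mbox{Tr}(\Delta(\rho)\log\Delta(\rho))$, a short rearrangement yields $S(\rho||\delta)=S(\rho||\Delta(\rho))+S(\Delta(\rho)||\delta)$. Non-negativity of the relative entropy, with equality only when its two arguments coincide, then shows that the minimum over $\mathbb{I}$ is attained uniquely at $\delta=\Delta(\rho)$; combined with part (i) this proves $S(\rho||\lim_{r\to\infty}\Lambda^r(\rho))=\min_{\delta\in\mathbb{I}}S(\rho||\delta)$. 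Putting $\delta=\Delta(\rho)$ in the same decomposition gives $S(\rho||\Delta(\rho))=S(\Delta(\rho))-S(\rho)=C_{r.e.}(\rho)$, which is the ``equivalently'' part of the statement.

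The only step requiring genuine care is the convergence in part (i): one must discard the boundary cases with $|1+\alpha_{\mu\nu}|=1$ — the identity map, the value $\alpha=-2$ when $N=2$, or unit-modulus complex factors in the general form — since for these $\Lambda^r$ either leaves the off-diagonal block fixed or merely rotates it and fails to converge. Part (ii) is the textbook fact that $\Delta(\rho)$ is the relative-entropy projection of $\rho$ onto $\mathbb{I}$, so it poses no real obstacle.
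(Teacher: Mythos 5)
Your proof is correct and follows essentially the same route as the paper's: show that $\lim_{r\to\infty}\Lambda^r(\rho)=\Delta(\rho)$ and then use the Pythagorean decomposition $S(\rho\|\delta)=S(\Delta(\rho))-S(\rho)+S(\Delta(\rho)\|\delta)$ to identify the unique minimizer over $\mathbb{I}$ as $\Delta(\rho)$, with minimum value $C_{r.e.}(\rho)$. Your explicit induction for the convergence step, and in particular your exclusion of the boundary cases $|1+\alpha_{\mu\nu}|=1$ (where $\Lambda^r(\rho)$ does not dephase and the claimed limit fails), is more careful than the paper, which simply asserts the limit from Eq.~(\ref{lam}).
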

\begin{proof}
From Eq.(\ref{lam}) one notices that applying $\Lambda$ a sufficient number of times on an input state $\rho$ results in a complete dephasing of $\rho$, i.e., $\lim_{r\rightarrow \infty}\Lambda^r(\rho)=\Delta(\rho)$. Now, let us write the relative entropy between $\rho$ and an incoherent state $\delta$ as $S(\rho || \delta)= S(\Delta(\rho))-S(\rho)+S(\Delta(\rho)||\delta)$. It is immediate to see that
\begin{equation}
    \min_{\delta\in {\mathbb I}} S(\rho||\delta)=S(\Delta(\rho))-S(\rho)+S(\Delta(\rho)||\Delta(\rho))=S(\Delta(\rho))-S(\rho)=C_{r.e.}(\rho)~,
\end{equation}
that is, the minimum relative entropy between an input state $\rho$ and the set of incoherent states (or attractors) is achieved on $\Delta(\rho)$, i.e., the state retrieved by the aQNN after a sufficient number of applications. As proven above, such minimum distance between the input state and the retrieved attractor is quantified by the relative entropy of coherence of the input.  
\end{proof}

\subsection{Physical realization of aQNNs}\label{stinespring}
\noindent Physical operations on a system can always be understood as unitary dynamics and projective measurements on a larger system. Indeed, given a quantum channel $\mathcal{E}:\mathcal{B}(\mathcal{H}) \rightarrow \mathcal{B}(\mathcal{H})$, there always exists an ancillary Hilbert space $\mathcal{A}$ of arbitrary dimension and a unitary operation $U \in \mathcal{B}(\mathcal{H} \otimes \mathcal{A})$ such that
\begin{equation}
\label{stine}
\mathcal{E}(\rho) = \mbox{Tr}_{\mathcal{A}}\left[U (\rho \otimes \ketbra{a_{0}}{a_{0}}) U^{\dagger} \right]~,
\end{equation}
\noindent for any $\rho \in \mathcal{B}(\mathcal{H})$, where $\mbox{Tr}_{\mathcal{A}}$ denotes the partial trace on the subsystem $\mathcal{A}$ and $\ketbra{a_{0}}{a_{0}}$ is the initial state of the ancilla. The corresponding unitary $U$ is known as the Stinespring dilation of the map $\mathcal{E}$ \cite{paulsen2003}. \noindent Thus, aQNNs can be  physically realized by appending an ancillary qudit to the network qudits, letting the composite system evolve under the corresponding Stinespring dilation, and finally discarding the ancilla. Knowing that aQNNs are associated to GIOs allows us to derive the Stinespring dilation  of the former in a straightforward way: 
\begin{proposition}
The Stinespring dilation of an $N$-dimensional aQNN is given by
\begin{equation}
\label{ugio}
	U_{\text{aQNN}} = \sum_{\mu=0}^{N-1}\ketbra{\mu}{\mu}\otimes U_\mu,
\end{equation}
 where $\{\ket{\mu}\}_{\mu =0}^{N-1}$ is an orthonormal basis and $U_\mu$ is a unitary operator such that $U_\mu\ket{a_0}=\ket{c_\mu}$, with  $\{\ket{c_\mu}\}_{\mu =0}^{N-1}$ a set of normalized  states fulfilling $\braket{c_\nu}{c_\mu}=1+\alpha_{\mu \nu}$, $\forall \mu\neq \nu$.
\end{proposition}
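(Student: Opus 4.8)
The plan is to verify that the operator $U_{\text{aQNN}}$ in Eq.~(\ref{ugio}) is (i) unitary and (ii) a valid Stinespring dilation of the aQNN map $\Lambda$ of Eq.~(\ref{lam}), and then to argue that this form is forced by the GIO structure established in Remark 1. First I would check unitarity directly: since $U_{\text{aQNN}} = \sum_\mu \ketbra{\mu}{\mu}\otimes U_\mu$ is block-diagonal with respect to the orthonormal basis $\{\ket{\mu}\}$, its adjoint is $\sum_\mu \ketbra{\mu}{\mu}\otimes U_\mu^\dagger$, and the product telescopes to $\sum_\mu \ketbra{\mu}{\mu}\otimes U_\mu^\dagger U_\mu = \mathds{1}\otimes\mathds{1}$ precisely because each $U_\mu$ is unitary; no constraint on the overlaps $\braket{c_\nu}{c_\mu}$ is needed for this step.

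Next I would compute the channel obtained from the dilation formula Eq.~(\ref{stine}) with ancilla initial state $\ket{a_0}$. Acting on $\ket{\mu}\otimes\ket{a_0}$ gives $\ket{\mu}\otimes\ket{c_\mu}$, so $U_{\text{aQNN}}(\rho\otimes\ketbra{a_0}{a_0})U_{\text{aQNN}}^\dagger = \sum_{\mu,\nu}\rho_{\mu\nu}\,\ketbra{\mu}{\nu}\otimes\ketbra{c_\mu}{c_\nu}$. Tracing out the ancilla replaces $\ketbra{c_\mu}{c_\nu}$ by $\braket{c_\nu}{c_\mu}$, yielding $\sum_{\mu,\nu}\rho_{\mu\nu}\braket{c_\nu}{c_\mu}\ketbra{\mu}{\nu}$. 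On the diagonal $\braket{c_\mu}{c_\mu}=1$ (normalization), reproducing the $\sum_\mu\rho_{\mu\mu}\ketbra{\mu}{\mu}$ term; off-diagonal, matching $\braket{c_\nu}{c_\mu}=1+\alpha_{\mu\nu}$ reproduces exactly the $(1+\alpha_{\mu\nu})$ coefficients and their Hermitian conjugates in Eq.~(\ref{lam}). This shows $\text{Tr}_{\mathcal A}[U_{\text{aQNN}}(\rho\otimes\ketbra{a_0}{a_0})U_{\text{aQNN}}^\dagger]=\Lambda(\rho)$.

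It then remains to argue the converse: that every Stinespring dilation of a GIO can be brought to this block-diagonal form. Here I would invoke that GIOs have diagonal Kraus operators $K_\alpha = \sum_\mu k_{\alpha\mu}\ketbra{\mu}{\mu}$ (stated in the coherence-theory review), assemble the canonical Stinespring isometry $V = \sum_\alpha K_\alpha\otimes\ket{\alpha}$, and observe $V\ket{\mu} = \ket{\mu}\otimes\sum_\alpha k_{\alpha\mu}\ket{\alpha} =: \ket{\mu}\otimes\ket{c_\mu}$; extending $V$ to a unitary $U$ on $\mathcal H\otimes\mathcal A$ and defining $U_\mu$ as any unitary sending the fixed ancilla state $\ket{a_0}$ to $\ket{c_\mu}$ gives the claimed form, with the TP condition $\sum_\alpha K_\alpha^\dagger K_\alpha = \mathds{1}$ forcing $\|\ket{c_\mu}\|=1$, and consistency of $\Lambda$ forcing $\braket{c_\nu}{c_\mu}=1+\alpha_{\mu\nu}$.

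The main obstacle is the last step: one must take care about the freedom in the Stinespring dilation (ancilla dimension, choice of $U_\mu$ off the vector $\ket{a_0}$, possible isometry-vs-unitary subtleties) and be explicit that the statement asserts \emph{existence} of a dilation of this form rather than uniqueness, so that the block-diagonal structure is a legitimate representative of the equivalence class of dilations. The direct computation in the second paragraph is routine; the care needed is mostly in phrasing the converse cleanly.
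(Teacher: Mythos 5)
Your proof is correct and follows essentially the same route as the paper: both compute $\mbox{Tr}_{\mathcal A}\left[U(\rho\otimes\ketbra{a_0}{a_0})U^{\dagger}\right]$ for the controlled-unitary $U$ and match the resulting overlaps $\braket{c_\nu}{c_\mu}$ against the coefficients $1+\alpha_{\mu\nu}$ of Eq.~(\ref{lam}). The only difference is one of provenance: the paper imports the form $U(\ket{\mu}\otimes\ket{a_0})=\ket{\mu}\otimes\ket{c_\mu}$ of a GIO dilation from a cited reference, whereas you verify unitarity explicitly and derive that block-diagonal form from the diagonal Kraus structure, which makes the argument more self-contained but does not change its substance.
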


\begin{proof}
Let $\{\ket{\mu}\otimes \ket{a_\mu}\}$ be an orthonormal basis of the composite Hilbert space $\mathcal{H} \otimes \mathcal{A}$. In \cite{yao2017} it was proven that the action of the Stinespring dilation of a GIO can be expressed as 
\begin{equation}
	U_{\text{GIO}} (\ket{\mu} \otimes \ket{a_{0}} )= \ket{\mu} \otimes \ket{c_{\mu}}~,
\end{equation}
 where  $\ket{c_{\mu}} = \sum_{i} c^{(i)}_{\mu} \ket{a_{i}}$ and $\{\ket{c_{\mu}}\}$ is a set of normalized but not necessarily orthogonal states.
 Expressing the state $\rho$ in the basis $\{\ket{\mu}\}_{\mu =0}^{N-1}$, i.e., $\rho = \sum_{\mu \nu} \rho_{\mu \nu}\ketbra{\mu}{\nu}$, and making use of Eq.(\ref{ugio}), we find that Eq.(\ref{stine}) takes the form
\begin{equation}
\label{stinegio}
\mathcal{E}_{\text{GIO}}(\rho) = \sum_{\mu \nu} \left( \sum_{k}c^{(k)}_{\mu} \bar{c}^{(k)}_{\nu}\right) \rho_{\mu \nu}  \ket{\mu}\bra{\nu}.
\end{equation}
Let us observe that, due to the normalization of the states $\{\ket{c_{\mu}}\}_{\mu =0}^{N-1}$, it holds
\begin{equation}
\sum_{k}c^{(k)}_{\mu} \bar{c}^{(k)}_{\mu}=1~, \quad \sum_{k}c^{(k)}_{\mu} \bar{c}^{(k)}_{\nu} < 1, ~\forall \mu \neq \nu,
\end{equation}
 so that $\mathcal{E}_{\text{GIO}}(\rho) = \rho$ for any diagonal state $\rho$, and the action of $\mathcal{E}_{\text{GIO}}$ does not increase the value of the off-diagonal elements.\\
 A direct comparison between Eq.(\ref{stinegio}) and the map of Eq.(\ref{lam}) shows that the two maps are equivalent if
\begin{equation}
	1+ \alpha_{\mu \nu} = \sum_{k} c^{(k)}_{\mu} \bar{c}^{(k)}_{\nu}=\braket{c_\nu}{c_\mu}, \quad \forall  \mu \neq \nu,
\end{equation}
which completes the proof.
\end{proof}

\subsection{Depth of aQNNs and decohering power}\label{decoh_power}
\noindent Consider the simple case of a maximally coherent qubit $\ket{\Psi_2}=\frac{1}{\sqrt{2}}(\ket{0}+\ket{1})$ suffering decoherence under the action of an aQNN, i.e.,
$\Lambda(\Psi_2)=\frac{1}{2}\begin{pmatrix}
1& 1+\alpha_{01}\\
1+\bar{\alpha}_{01} & 1
\end{pmatrix}$,
where, from now on, we use the notation  $\Psi:=\ketbra{\Psi}{\Psi}$.
From here it is easy to see that an aQNN with a smaller value of $|1+\alpha_{01}|$ needs to be applied less times on a state in order to completely destroy its coherences. To quantify the ability of operations to cause decoherence, the notion of \textit{decohering power} is invoked. The decohering power of a map $\mathcal{E}:\mathcal{B}(\mathcal{H})\rightarrow \mathcal{B}(\mathcal{H}) $, $\dim(\mathcal{H})=N$, with respect to some coherence measure $C$ was introduced in \cite{mani2015}:
\begin{eqnarray}
D_C(\mathcal{E})&=&\max_{\Psi_N\in S_{\text{MCS}}}(C(\Psi_N)-C(\mathcal{E}(\Psi_N)))\nonumber \\
&=& C(\Psi_N)-\min_{\Psi_N\in S_{\text{MCS}}}C(\mathcal{E}(\Psi_N)).
\end{eqnarray}

When considering the $l_1$-coherence measure we immediately find: 
\begin{proposition}\label{decoh}
The $l_1$-decohering power of an $N$-dimensional aQNN described by the CPTP map $\Lambda$ is given by \begin{equation}
D_{C_{l_1}}(\Lambda)=N-1-\frac{1}{N}\sum_{\mu \neq \nu}|1+\alpha_{\mu \nu}|,
\end{equation}
fulfilling $0\leq D_{C_{l_1}}(\Lambda) \leq N-1$.
\end{proposition}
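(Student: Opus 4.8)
The plan is to evaluate the two terms entering the definition of $D_{C_{l_1}}(\Lambda)$ directly, exploiting that $C_{l_1}$ only depends on the moduli of the off-diagonal matrix elements. First I would fix an arbitrary maximally coherent state $\ket{\Psi_N}=\frac{1}{\sqrt N}\sum_{j=0}^{N-1}e^{i\theta_j}\ket{j}\in S_{\text{MCS}}$ and note that its density matrix has entries $(\Psi_N)_{\mu\nu}=\frac{1}{N}e^{i(\theta_\mu-\theta_\nu)}$, so that $|(\Psi_N)_{\mu\nu}|=1/N$ for all $\mu\neq\nu$ and hence $C_{l_1}(\Psi_N)=\sum_{\mu\neq\nu}\frac1N=N-1$, irrespective of the phases. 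This already shows that the first term in $D_{C_{l_1}}(\Lambda)$ can be pulled out of the optimization.

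Next I would apply $\Lambda$ as given in Eq.(\ref{lam}): the diagonal entries are left invariant while each coherence $\rho_{\mu\nu}$ is rescaled by $(1+\alpha_{\mu\nu})$ (the h.c.\ term fixing the lower triangle), so $|(\Lambda(\Psi_N))_{\mu\nu}|=\frac1N|1+\alpha_{\mu\nu}|$ and therefore $C_{l_1}(\Lambda(\Psi_N))=\frac1N\sum_{\mu\neq\nu}|1+\alpha_{\mu\nu}|$. The key observation is that this value is again independent of the phases $\{\theta_j\}$, so the minimization over $S_{\text{MCS}}$ in the definition of the decohering power is vacuous and $\min_{\Psi_N\in S_{\text{MCS}}}C_{l_1}(\Lambda(\Psi_N))=\frac1N\sum_{\mu\neq\nu}|1+\alpha_{\mu\nu}|$. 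Substituting both computations into $D_{C_{l_1}}(\Lambda)=C_{l_1}(\Psi_N)-\min_{\Psi_N\in S_{\text{MCS}}}C_{l_1}(\Lambda(\Psi_N))$ yields the stated formula.

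Finally, for the inequalities I would invoke the complete-positivity constraint $J_\Lambda\geq0$ already recorded in Section~\ref{intro_aqnns}, namely $|1+\alpha_{\mu\nu}|^2\leq1$, i.e.\ $0\leq|1+\alpha_{\mu\nu}|\leq1$ for every $\mu\neq\nu$. Summing over the $N(N-1)$ ordered pairs gives $0\leq\frac1N\sum_{\mu\neq\nu}|1+\alpha_{\mu\nu}|\leq N-1$, hence $0\leq D_{C_{l_1}}(\Lambda)\leq N-1$; the lower extreme is reached by the identity channel ($\alpha_{\mu\nu}=0$) and the upper one by the fully dephasing channel $\Delta$ ($\alpha_{\mu\nu}=-1$).

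I do not expect a genuine obstacle here: the only step deserving care is verifying that it is $C_{l_1}(\Lambda(\Psi_N))$, and not merely $C_{l_1}(\Psi_N)$, that is constant over the whole set $S_{\text{MCS}}$ — this is precisely what collapses the optimization in the definition of $D_C$ to a single evaluation. Once this is noticed, the proof reduces to a one-line substitution together with the positivity bound on the parameters $\alpha_{\mu\nu}$.
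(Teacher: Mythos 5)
Your proof is correct and is precisely the direct computation the paper leaves implicit when it states the proposition as something one ``immediately'' finds: evaluate $C_{l_1}$ on a maximally coherent state and on its image under $\Lambda$, observe that both values are phase-independent so the optimization over $S_{\text{MCS}}$ is trivial, and bound the result using $|1+\alpha_{\mu\nu}|\leq 1$ from complete positivity. Nothing is missing, and your remark identifying which channels saturate the two extremes is a correct (if optional) addition.
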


\noindent We define the depth of an aQNN as the minimum number of times, $r$, that the map $\Lambda$ has to be applied on a state until it becomes stationary up to some tolerable error $\eta$, that is, until the classification process is accomplished with sufficient accuracy (see Fig.\ref{fig:classif}a). At that moment, the coherence of the input is small, i.e., $C_{l_1}(\Lambda^r(\rho))=\eta$, with $0<\eta\ll 1$.
\begin{figure}[H]
  \centering
  \includegraphics[width=0.7\linewidth]{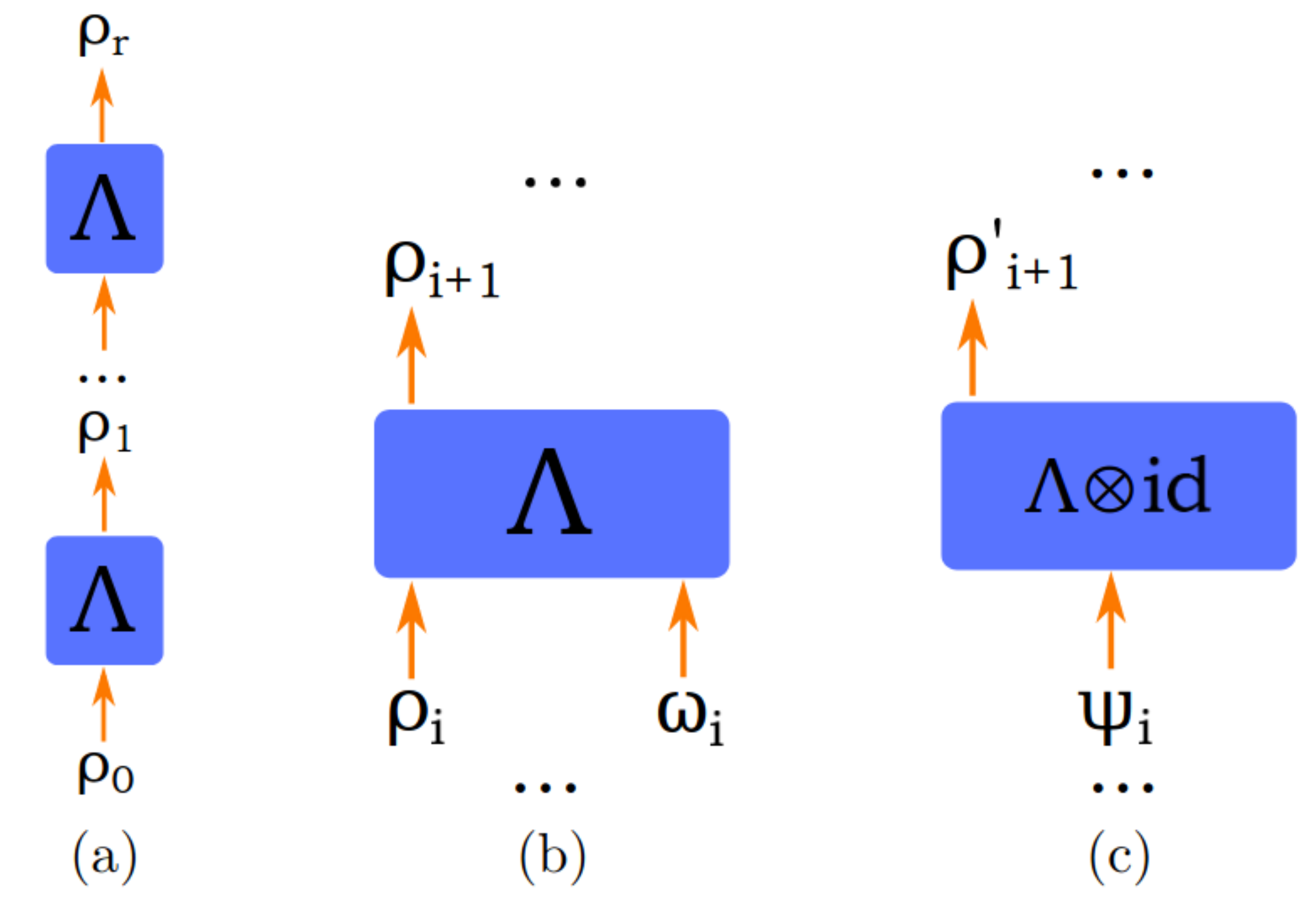}
  \caption{(a) Scheme of a typical classification process. The aQNN described by $\Lambda$ is applied $r$ times until a stationary state $\rho_r$ is reached. (b) Protocol for enhancing the performance of an aQNN associated to $\Lambda$ at layer $i$ using the coherence present in $\omega_i$: $\rho_{i+1}={\cal{N}}_i(\rho_i)=\mbox{Tr}_{\cal A}(\Lambda(\rho_i \otimes \omega_i))$. (c) Protocol for reducing the depth of an aQNN exploiting the entanglement present in $\psi_i$ s.t. $\rho_i=\mbox{Tr}_{\cal A}(\psi_i)$: $\rho'_{i+1}=\mbox{Tr}_{\cal A}\{(\Lambda\otimes \text{id})(\psi_i)\}$.}
       \label{fig:classif}
\end{figure}

Consider the case where the input state is a maximally coherent state, $\Psi_N$, which decoheres uniformly under the action of an aQNN, i.e., $\alpha_{\mu \nu} = \alpha_{\nu \mu} \equiv \alpha ~ \forall \mu,\nu$. In this case we have $C_{l_1}(\Lambda^r(\Psi_N))=(N-1)^{1-r}(N-1-D_{C_{l_1}}(\Lambda))^r$. Allowing for stationarity to be reached within a small error $\eta$, i.e., $C_{l_1}(\Lambda^r(\Psi_N))\leq  \eta$,  immediately yields 

\begin{proposition}
The  depth $r$ of an $N$-dimensional aQNN described by the CPTP map $\Lambda$ with $\alpha_{\mu \nu} = \alpha_{\nu \mu} \equiv \alpha~ \forall \mu \neq \nu$ acting on a maximally coherent state such that stationarity is reached within  $\eta$-precision is given by its $l_1$-decohering power:
\begin{equation}
r\geq \left \lceil{\dfrac{\log(\eta)-\log(N-1)}{\log(N-1-D_{C_{l_1}}(\Lambda))-\log(N-1)}}\right \rceil.
\end{equation}
\end{proposition}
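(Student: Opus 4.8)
The plan is to start from the closed-form expression for the residual $l_1$-coherence after $r$ layers, already displayed just above the statement, namely $C_{l_1}(\Lambda^r(\Psi_N)) = (N-1)^{1-r}(N-1-D_{C_{l_1}}(\Lambda))^r$, and to solve the stationarity-threshold inequality $C_{l_1}(\Lambda^r(\Psi_N)) \le \eta$ for $r$. First I would recall why that expression holds in the uniform case $\alpha_{\mu\nu}\equiv\alpha$: by Proposition \ref{decoh} one has $N-1-D_{C_{l_1}}(\Lambda) = (N-1)\,|1+\alpha|$, and since each application of $\Lambda$ in Eq.(\ref{lam}) multiplies every off-diagonal matrix element by $(1+\alpha)$, iterating on $\Psi_N$ (whose off-diagonal entries all have modulus $1/N$) gives $C_{l_1}(\Lambda^r(\Psi_N)) = \sum_{\mu\neq\nu}\frac1N|1+\alpha|^r = (N-1)\,|1+\alpha|^r$, in agreement with the displayed formula.

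Next I would impose the threshold $(N-1)^{1-r}(N-1-D_{C_{l_1}}(\Lambda))^r \le \eta$. Taking logarithms and rearranging, $r\,\bigl[\log(N-1-D_{C_{l_1}}(\Lambda)) - \log(N-1)\bigr] \le \log\eta - \log(N-1)$. Dividing both sides by the bracketed quantity — which is \emph{strictly negative}, because by Proposition \ref{decoh} one has $0 < N-1-D_{C_{l_1}}(\Lambda) < N-1$ whenever the aQNN is genuinely decohering — reverses the inequality and gives $r \ge (\log\eta - \log(N-1))/(\log(N-1-D_{C_{l_1}}(\Lambda)) - \log(N-1))$. Since the depth must be a positive integer, applying the ceiling function yields the claimed bound.

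The one point I would treat with care — and the only real obstacle — is the sign of the logarithmic factor in the denominator and the behaviour at the endpoints of the admissible range $0 \le D_{C_{l_1}}(\Lambda) \le N-1$. When $D_{C_{l_1}}(\Lambda) = 0$ (the trivial map $\Lambda = \mathrm{id}$, i.e. $\alpha = 0$), coherence is never reduced and no finite depth exists, so this degenerate case must be excluded from the statement; when $D_{C_{l_1}}(\Lambda) = N-1$ (i.e. $|1+\alpha| = 0$), a single layer already fully decoheres any input, $C_{l_1}(\Lambda^r(\Psi_N)) = 0$ for every $r\ge 1$, and one should simply verify that the formula returns $r = 1$ under the convention $\log 0 = -\infty$. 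For every intermediate value the denominator is finite and negative, the logarithmic manipulation above is valid as written, and the rest is elementary algebra together with the definition of $\lceil\,\cdot\,\rceil$.
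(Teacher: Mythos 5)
Your proposal is correct and follows essentially the same route as the paper, which simply asserts the closed form $C_{l_1}(\Lambda^r(\Psi_N))=(N-1)^{1-r}(N-1-D_{C_{l_1}}(\Lambda))^r$ in the uniform case and solves $C_{l_1}(\Lambda^r(\Psi_N))\leq\eta$ for $r$. Your added verification that $N-1-D_{C_{l_1}}(\Lambda)=(N-1)\,|1+\alpha|$ and your attention to the sign of the logarithmic denominator and the endpoint cases are sound elaborations of the paper's one-line derivation.
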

Note that the lower bound for $r$ is tight, since in this case $C_{l_1}(\Lambda^r(\Psi_N))$ and $D_{C_{l_1}}(\Lambda)$ are exactly related.
\begin{figure}[H]
\centering
\includegraphics[width=0.8\linewidth]{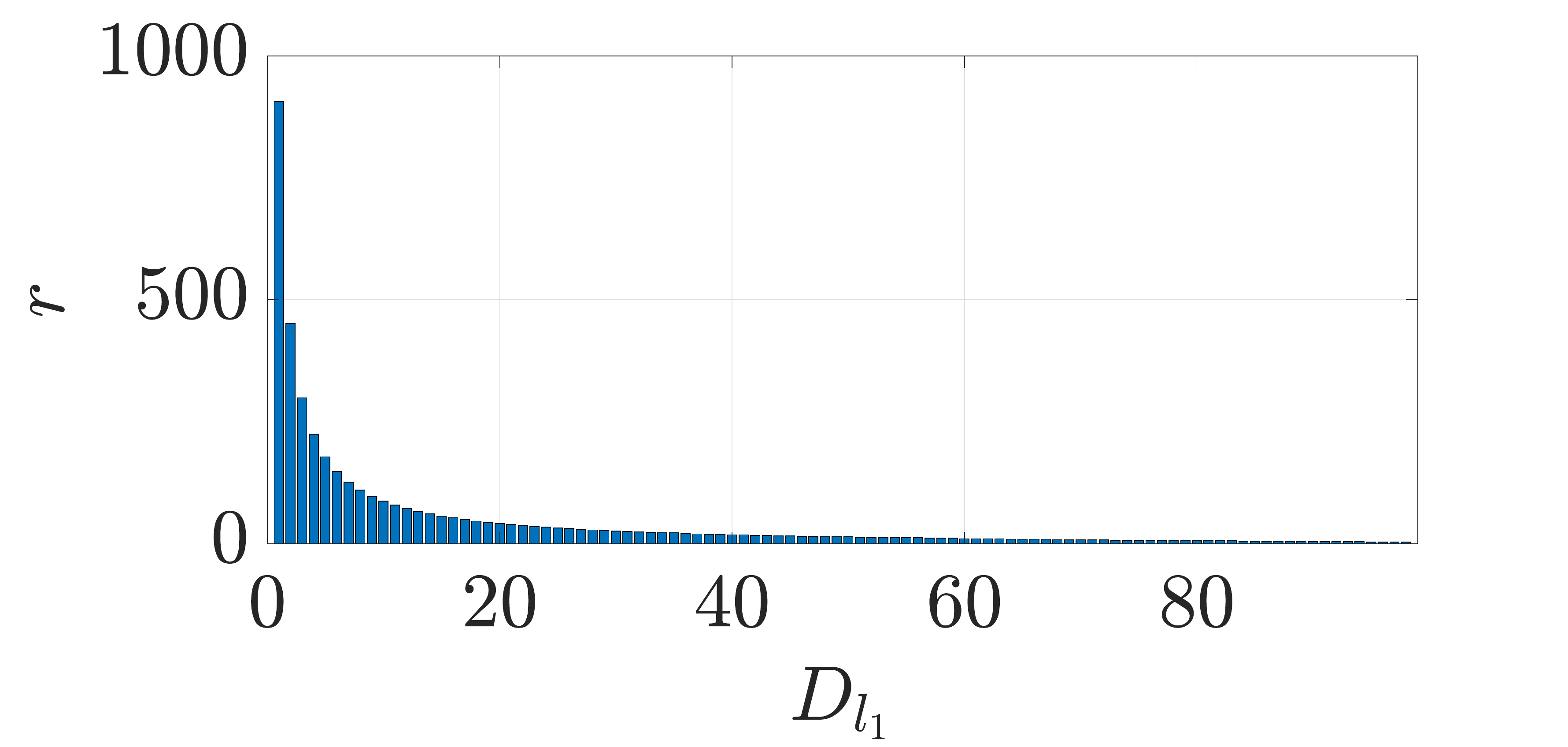}
\caption{Minimum depth of a $100$-dimensional aQNN with $\alpha_{\mu \nu} = \alpha_{\nu \mu} \equiv \alpha$ for all $\mu \neq \nu$ acting on a maximally coherent state such that stationarity is reached within error $\eta=0.01$, as a function of its $l_1$-decohering power.} 
\label{fig:decoh}
\end{figure}

\noindent Fig.\ref{fig:decoh} shows the minimum number of layers that a $100$-dimensional aQNN of this kind needs to have in order for stationarity to be achieved within an error $\eta=0.01$. Moreover, it illustrates how the depth of an aQNN decreases with its decohering power.  
Turning to generic aQNNs, however, it is not possible to find a tight lower bound for the depth, since under non-uniform decoherence the main quantities $C_{l_1}(\Lambda^r(\Psi_N))$ and $D_{C_{l_1}}(\Lambda)$ are not equivalent.

\subsection{Performance of aQNNs cannot be enhanced either by  coherence or entanglement}\label{nogo}

\noindent In both quantum and classical neural computing two main stages are distinguished: the training phase and the inference phase. Throughout this work, we have considered already trained aQNNs and we have investigated their properties during the inference phase. In this section, we are interested in analysing whether the performance of an aQNN can be improved at the inference stage itself. As it is common in the literature about neural computing, enhancing the performance of a neural network implies: i) increasing its accuracy, and ii) accelerating the inference process, that is, reducing the depth of the network (as we defined it in Section \ref{decoh_power}). Here we want to investigate whether the performance of an aQNN can be improved by resorting exclusively to quantum resources. To that aim, one can begin by implementing some channel ${\cal N}_i$ on a given layer $i$ capable of mitigating some of the errors occurred in previous layers, which results in an increased overall accuracy, and/or reducing the number of layers left until the inference process is accomplished, i.e., decreasing the overall $r$. We consider a scenario where an aQNN of arbitrary dimension is coupled to an ancillary system $\cal{A}$, under the action of a CPTP map $\Lambda: \mathcal{B}({\cal{H}} \otimes {\cal{A}}) \rightarrow \mathcal{B}({\cal{H}} \otimes {\cal{A}})$, where the ancilla supplies coherent states $\omega_i\in {\cal  B}({\cal{A}})$.
In this case, one could only exploit the aQNN and the coherent resources in order to realize the target channel ${\cal N}_i:\mathcal{B}({\cal{H}} \otimes {\cal{A}}) \rightarrow \mathcal{B}({\cal{H}})$. The procedure would be as follows: i) append a coherent ancilla $\omega_i\in {\cal B}({\cal{A}})$ to the input state $\rho_i \in {\cal B}({\cal{H}})$, ii) apply   $\Lambda$ on the composite system, and iii) discard the ancillary state (see Fig.\ref{fig:classif}b). Formally, we can express this process as 
\begin{equation}
\rho_{i+1}={\cal{N}}_i(\rho_i)=\mbox{Tr}_{\cal A} (\Lambda(\rho_i \otimes \omega_{i})).
\end{equation}

As discussed in \cite{ben_dana_erratum}, a non-coherence-generating operation ${\cal M}$ is able to realize a coherent channel in this way only if it can activate coherence, i.e., if it fulfills $\Delta \circ {\cal M} \circ \Delta \neq \Delta \circ {\cal M}$. Noting that  GIOs violate this condition \cite{liu2017}, the following no-go result holds:
\begin{proposition}
Coherence cannot be used to enhance the performance of  aQNNs.
\end{proposition}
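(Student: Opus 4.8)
The plan is to reduce the proposition to two ingredients already established above: that an aQNN is a GIO (Remark~1), and the characterization of \cite{ben_dana_erratum} that a non-coherence-generating operation can, via an ancilla, realize a genuinely coherent channel only if it can \emph{activate} coherence, i.e.\ only if $\Delta\circ\mathcal{M}\circ\Delta\neq\Delta\circ\mathcal{M}$. First I would observe that the map $\Lambda:\mathcal{B}(\mathcal{H}\otimes\mathcal{A})\to\mathcal{B}(\mathcal{H}\otimes\mathcal{A})$ appearing in the enhancement protocol of Fig.~\ref{fig:classif}b is, by assumption, an aQNN on the enlarged system, hence of the form of Eq.(\ref{lam}) with respect to the product incoherent basis $\{\ket{\mu}\otimes\ket{a_j}\}$ of $\mathcal{H}\otimes\mathcal{A}$; by Remark~1 it is therefore a GIO, and in particular all its Kraus operators are diagonal in that basis.

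Next I would record the immediate consequence that diagonal Kraus operators commute with the total dephasing map $\Delta$ on $\mathcal{H}\otimes\mathcal{A}$, so that $\Delta\circ\Lambda=\Lambda\circ\Delta$ and hence $\Delta\circ\Lambda\circ\Delta=\Lambda\circ\Delta\circ\Delta=\Lambda\circ\Delta=\Delta\circ\Lambda$; this is precisely the failure of the activation condition, as also noted in \cite{liu2017}. Invoking \cite{ben_dana_erratum}, the channel ${\cal N}_i(\rho_i)=\mbox{Tr}_{\mathcal{A}}(\Lambda(\rho_i\otimes\omega_i))$ obtained by appending a coherent ancilla $\omega_i$, applying $\Lambda$, and discarding $\mathcal{A}$ is then necessarily non-coherence-generating, for \emph{every} choice of $\omega_i$. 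If one prefers a self-contained check, the same conclusion follows from a one-line computation: expanding $\Lambda$ in its diagonal Kraus operators, feeding in $\delta\otimes\omega_i$ with $\delta$ incoherent, and tracing out $\mathcal{A}$ yields a state diagonal in $\{\ket{\mu}\}$ (in fact, using trace preservation, ${\cal N}_i$ fixes every incoherent state of $\mathcal{H}$ pointwise), so ${\cal N}_i$ sends $\mathbb{I}$ into $\mathbb{I}$.

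Finally I would close the argument operationally. Since every ${\cal N}_i$ realizable in this way is a free operation of coherence theory, inserting such layers can neither re-inject the coherence that earlier layers have destroyed — so the accuracy of the classification cannot be increased — nor accelerate the relaxation towards the incoherent attractors beyond what bare aQNN layers already achieve, because a composition of non-coherence-generating maps is again non-coherence-generating and hence, as a coherence-manipulating device, no stronger than iterating $\Lambda$ itself. Thus the coherent resources $\{\omega_i\}$ confer no advantage, which is the assertion of the proposition.

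The manipulations with $\Delta$ and the Kraus operators are routine; the part that needs the most care is the operational step, where one must spell out precisely why a non-coherence-generating ${\cal N}_i$ is useless for \emph{both} figures of merit (accuracy and depth) and handle the multi-layer protocol — this is exactly where closure of the free operations under composition does the work. The dependence on the external result of \cite{ben_dana_erratum} is also load-bearing, so I would state its hypotheses explicitly (a single use of the resource operation, an arbitrary-dimensional ancilla in an arbitrary coherent state).
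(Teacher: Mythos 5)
Your proposal is correct and follows essentially the same route as the paper: identify the aQNN with a GIO, note that its diagonal Kraus structure makes it commute with $\Delta$ so the coherence-activation condition $\Delta\circ\mathcal{M}\circ\Delta\neq\Delta\circ\mathcal{M}$ of \cite{ben_dana_erratum} fails, and conclude that the ancilla's coherence cannot be exploited to realize any channel $\mathcal{N}_i$ beyond what the bare network achieves. Your added explicit verification that $\mathcal{N}_i$ fixes incoherent states, and your care with the operational reading of ``enhancement,'' only make the argument more complete than the paper's own brief justification.
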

Since GIOs are unable to exploit the coherence of $\omega_i$ to help implement ${\cal{N}}_i(\rho_i)$ (unlike MIOs  \cite{Diaz2018} or IOs), aQNNs cannot use coherence to boost their own performance. \\

Another strategy to reduce the depth of an aQNN by increasing its decohering power relies on the exploitation of initial correlations \cite{takahashi2021}. Consider an input state $\rho_i \in \mathcal{B}(\mathcal{H})$, purified by the entangled state $\psi_i \in  \mathcal{B}(\mathcal{H} \otimes \mathcal{A})$, i.e, $\mbox{Tr}_{\cal A} (\psi_{i}) =\rho_i$. The question is to find whether using such an entangled input state causes a stronger decoherence in the output, thus reducing the number of times that the map $\Lambda:  \mathcal{B}(\mathcal{H})\rightarrow \mathcal{B}(\mathcal{H})$ has to be applied before the classification task is completed. Stated differently, we want to investigate whether $C(\Lambda(\rho_i))$ is greater than $C(\mbox{Tr}_{\cal A} \{(\Lambda \otimes \text{id}) (\psi_{i})\})$ (see Fig.\ref{fig:classif}c). We hereby show  this is not possible:
\begin{proposition}
Initial entanglement cannot be used to reduce the depth of aQNNs.
\end{proposition}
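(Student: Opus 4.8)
The plan is to show that the entanglement-assisted layer returns exactly the same reduced state as the bare layer, i.e.\ $\mbox{Tr}_{\cal A}\{(\Lambda\otimes\text{id})(\psi_i)\}=\Lambda(\rho_i)$. Once this identity is in hand, $C(\mbox{Tr}_{\cal A}\{(\Lambda\otimes\text{id})(\psi_i)\})=C(\Lambda(\rho_i))$ for \emph{every} coherence measure $C$, so the decoherence produced in one layer — and hence the depth defined in Section \ref{decoh_power} — is not affected by feeding in the purification $\psi_i$ instead of $\rho_i$ and discarding the ancilla.

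First I would fix a pure-state ensemble $\rho_i=\sum_k p_k\ketbra{\phi_k}{\phi_k}$, take the canonical purification $\ket{\psi_i}=\sum_k\sqrt{p_k}\ket{\phi_k}\otimes\ket{k}$ with $\{\ket{k}\}$ orthonormal in $\cal A$, and expand $\ket{\phi_k}=\sum_\mu c^{(k)}_\mu\ket{\mu}$. Using that, by Eq.(\ref{lam}), $\Lambda$ acts entrywise on $\ketbra{\mu}{\nu}$ (fixing the diagonal and rescaling the off-diagonal part by $1+\alpha_{\mu\nu}$ — the hallmark of a GIO with diagonal Kraus operators), one gets
\begin{equation}
(\Lambda\otimes\text{id})(\psi_i)=\sum_{k,l}\sqrt{p_kp_l}\Big[\sum_\mu c^{(k)}_\mu\bar c^{(l)}_\mu\,\ketbra{\mu}{\mu}+\underset{(\mu\neq\nu)}{\sum_{\mu,\nu}}(1+\alpha_{\mu\nu})\,c^{(k)}_\mu\bar c^{(l)}_\nu\,\ketbra{\mu}{\nu}\Big]\otimes\ketbra{k}{l},
\end{equation}
and tracing out $\cal A$ annihilates every $k\neq l$ term, leaving exactly $\sum_\mu(\rho_i)_{\mu\mu}\ketbra{\mu}{\mu}+\sum_{\mu\neq\nu}(1+\alpha_{\mu\nu})(\rho_i)_{\mu\nu}\ketbra{\mu}{\nu}=\Lambda(\rho_i)$, since $(\rho_i)_{\mu\nu}=\sum_kp_k c^{(k)}_\mu\bar c^{(k)}_\nu$. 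A cleaner way to reach the same conclusion — worth stating as a remark — is that for \emph{any} CPTP map $\Lambda$ on $\cal H$ one has $\mbox{Tr}_{\cal A}\circ(\Lambda\otimes\text{id})=\Lambda\circ\mbox{Tr}_{\cal A}$, by writing the argument as $\sum_j A_j\otimes B_j$ and invoking linearity of the partial trace; applied to $\psi_i$ this gives $\mbox{Tr}_{\cal A}\{(\Lambda\otimes\text{id})(\psi_i)\}=\Lambda(\rho_i)$ independently of which purification (in particular, of which ancilla dimension) one chooses.

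Iterating layer by layer then shows the entanglement-assisted protocol of Fig.\ref{fig:classif}c reproduces the bare decoherence trajectory $\{\Lambda^r(\rho_i)\}_r$, so $C_{l_1}(\Lambda^r(\rho_i))$ is unchanged and no reduction of the depth is possible. There is essentially no hard step: the only points requiring care are (i) phrasing the claim so that it is manifestly independent of the purifying system — which the linearity argument makes transparent — and (ii) recalling that, by the definition adopted in Section \ref{decoh_power}, "reducing the depth" is purely a statement about $C_{l_1}(\Lambda^r(\rho_i))$, so that equality of the reduced outputs settles it. The conceptual content is simply that a GIO sees a reduced input exactly as it would without the ancilla: correlations with a subsequently discarded environment are invisible to it.
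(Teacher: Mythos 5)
Your proof is correct and follows essentially the same route as the paper's: purify the mixed input as $\ket{\psi_i}=\sum_k\sqrt{p_k}\ket{\phi_k}\ket{k}$, expand in the incoherent basis, apply $\Lambda\otimes\mathrm{id}$, and observe that tracing out the ancilla kills the $k\neq l$ cross terms and recovers exactly $\Lambda(\rho_i)$. Your additional remark that this is just an instance of the general identity $\mathrm{Tr}_{\cal A}\circ(\Lambda\otimes\mathrm{id})=\Lambda\circ\mathrm{Tr}_{\cal A}$, valid for any linear map and independent of the chosen purification, is a cleaner and slightly more general way of reaching the same conclusion.
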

\begin{proof}
Consider a generally mixed input state $\rho=\sum_{k} p_k \ketbra{\phi_k}{\phi_k}$, where $p_k\in [0,1]$ and the states $\ket{\phi_k}$ are not necessarily orthonormal. A purification of $\rho$ is given by $\ket{\psi}=\sum_{k} \sqrt{p_k} \ket{\phi_k}\ket{k}$, where $\{\ket{k}\}$ is an orthonormal basis of $\mathcal{A}$. Expressing $\rho$ in this basis, i.e., $\rho= \sum_k p_k \sum_{\mu \nu} c_\mu^{(k)}\bar{c}_\nu^{(k)}\ketbra{\mu}{\nu}$, with $c_\xi^{(k)}=\braket{\xi}{\phi_k}$, leads to
\begin{eqnarray}
\mbox{Tr}_{\cal A}\{(\Lambda \otimes \text{id})(\psi)\}&=&\sum_k p_k \bigg[\sum_\mu |c_\mu^{(k)}|^2\ketbra{\mu}{\mu} \nonumber + \sum_{\mu < \nu}\bigg\{ c_\mu^{(k)}\bar{c}_\nu^{(k)}(1+\alpha_{\mu \nu})\ketbra{\mu}{\nu} +\text{h.c.} \bigg\} \bigg]\\
&=& \Lambda(\rho).
\end{eqnarray}
Therefore, $C(\Lambda(\rho))=C(\mbox{Tr}_{\cal A}\{(\Lambda\otimes \text{id})(\psi)\}$ and initial correlations cannot produce a faster decoherence in the input state $\rho$. 
\end{proof}

\section{Faulty aQNNs}\label{results2}
We examine now the realistic scenario of non-error-free aQNNs, that is, the case where some error in the implementation of the network is taken into account. In particular, we denote as \emph{faulty} an aQNN such that the associated map, $\Lambda_{\epsilon}$, preserves the stationary states up to a certain error $\epsilon \in [0,1]$, i.e.,
\begin{equation*}
\Lambda_{\epsilon}(\ketbra{\mu}{\mu}) = (1-\epsilon) \ketbra{\mu}{\mu} + \frac{\epsilon}{N-1} \sum_{\mu < \nu}^{N-1} \ketbra{\nu}{\nu}~.
\end{equation*}

\noindent According to this definition, it is clear that there exist many maps satisfying the above requirement. In what follows we consider one, denoted as $\Lambda_{\epsilon,\gamma}$, whose action over a generic quantum state $\rho$ can be written as
\begin{equation}
\label{lambda}
\Lambda_{\epsilon,\gamma} (\rho) = \sum_{\mu=0}^{N-1} \rho_{\mu \mu} \Big[(1- \epsilon) \ketbra{\mu}{\mu}+\underset{(\mu < \nu)}{\sum_{ \nu=1}^{N-1}} \frac{\epsilon}{N-1} \ketbra{\nu}{\nu} \Big] +\sum_{\mu< \nu}^{N-1} \Big\{ \rho_{\mu \nu} \Big[ (1+\alpha_{\mu\nu}) \ketbra{\mu}{\nu} +\gamma \ketbra{\nu}{\mu}\Big] + \mbox{h.c.}\Big\}~,
\end{equation}
\noindent with $\epsilon \in [0,1]$ and $\gamma \in \mathbb{C}$.
Notice that Eq.(\ref{lambda}) corresponds to a faulty map where $\epsilon$ represents the error on achieving the stationary states and $\gamma$ is a damping factor in the off-diagonal terms. We define a faulty $(\epsilon,\gamma)$-aQNN as that associated to the map $\Lambda_{\epsilon,\gamma}$ of Eq.(\ref{lambda}), whose corresponding Choi state is
\begin{equation}
\label{JCS}
  J_{\Lambda_{\epsilon,\gamma}} = \sum_{\mu=0}^{N-1} \Big[(1- \epsilon) \ketbra{\mu \mu}{\mu \mu}+\underset{(\mu < \nu)}{\sum_{\nu=1}^{N-1}} \frac{\epsilon}{N-1} \ketbra{\nu \nu}{\nu \nu} \Big] +\sum_{\mu< \nu}^{N-1} \Big[ (1+\alpha_{\mu\nu}) \ketbra{\mu \mu}{\nu \nu} +\gamma \ketbra{\mu \nu}{\nu \mu} + \mbox{h.c.} \Big]~,
\end{equation}
\noindent with $|1+\alpha_{\mu \nu}|^2 \leq (1-\epsilon)^2$ for all $\mu \neq \nu$. %

\noindent Notice that Eq.(\ref{JCS}) can be cast as a direct sum, i.e., $J_{\Lambda_{\epsilon,\gamma}}= J \underset{0\leq \mu <\nu \leq N-1}{\bigoplus} J_{\mu \nu}$, with
\begin{align}
\label{j1}
&J =
\begin{pmatrix*}
1-\epsilon  & 1+ \alpha_{01} &  \cdots & 1+ \alpha_{0,N-1} \\
1+ \bar{\alpha}_{01} & 1-\epsilon   &  \cdots & 1+\alpha_{1,N-1}\\
\vdots & \vdots & \ddots & \vdots &  \\
1+\bar{\alpha}_{0,N-1} & 1+ \bar{\alpha}_{1,N-1} & \cdots &1-\epsilon
\end{pmatrix*}~,\\
\label{j2}
&J_{\mu \nu} =
\begin{pmatrix*}
\epsilon/(N-1)  &  \gamma \\
\bar{\gamma}  & \epsilon/(N-1)
\end{pmatrix*}~,
\end{align}
\noindent where the bar symbol denotes the complex conjugation and each $2 \times 2$ matrix $J_{\mu \nu}$ appears with multiplicity $N(N-1)/2$. Following the same arguments of Section \ref{intro_aqnns}, $\Lambda_{\epsilon,\gamma}$ is a CPTP map iff $J_{\Lambda_{\epsilon,\gamma}}\geq 0$. Choosing $|\gamma| \in [0,\epsilon/(N-1)]$ guarantees that $J_{\mu \nu} \geq 0$ for every $\mu \neq \nu$, but finding analytical conditions on the parameters such that $J \geq 0$ is, in general, a cumbersome task. Nevertheless, recalling that $\alpha_{\mu \nu}=\alpha_{\nu \mu} = \alpha \in \mathbb{R}$ for every $\mu \neq \nu$, $\Lambda_{\epsilon,\gamma}$ is a CPTP map whenever $\alpha \in [(\epsilon -N)/(N-1), -\epsilon]$ and $|\gamma| \in [0, \epsilon/(N-1)]$. \\

Notice that, also in the case of faulty maps $\Lambda_{\epsilon,\gamma}$, the non-coherence-generating condition $\Delta \circ \Lambda_{\epsilon,\gamma} \circ \Delta =\Lambda_{\epsilon,\gamma} \circ \Delta$ holds true.
Moreover, it is possible to show that $(\epsilon,\gamma)$-aQNNs correspond to SIOs:

\begin{lemma}
$(\epsilon,\gamma)$-aQNNs are described by SIOs.
\end{lemma}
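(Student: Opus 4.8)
The plan is to produce an explicit Kraus decomposition of $\Lambda_{\epsilon,\gamma}$ in which \emph{every} Kraus operator $K_\alpha$ satisfies both $K_\alpha\,\mathbb I\,K_\alpha^\dagger\subset\mathbb I$ and $K_\alpha^\dagger\,\mathbb I\,K_\alpha\subset\mathbb I$, which is precisely the defining property of an SIO. The natural route is through the Choi state: the Kraus operators of $\Lambda_{\epsilon,\gamma}$ are in correspondence, via the Choi--Jamio\l kowski isomorphism used in Section~\ref{intro_aqnns}, with the vectors $\{\ket{v_\alpha}\}$ of any decomposition $J_{\Lambda_{\epsilon,\gamma}}=\sum_\alpha\dketbra{v_\alpha}$, where $\ket{\mu}\otimes\ket{\nu}$ corresponds to $\ketbra{\nu}{\mu}$. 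Since $J_{\Lambda_{\epsilon,\gamma}}\ge 0$, each $\ket{v_\alpha}$ lies in $\mathrm{supp}\,(J_{\Lambda_{\epsilon,\gamma}})$, which by Eqs.(\ref{j1})--(\ref{j2}) is contained in $\mathrm{span}\{\ket{\mu\mu}\}_{\mu}\oplus\bigoplus_{\mu<\nu}\mathrm{span}\{\ket{\mu\nu},\ket{\nu\mu}\}$.

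First I would exploit that $J_{\Lambda_{\epsilon,\gamma}}$ is \emph{block diagonal} with respect to this splitting of its support: it equals $J$ acting on $\mathrm{span}\{\ket{\mu\mu}\}_\mu$ plus, for each $\mu<\nu$, a copy of the $2\times2$ block $J_{\mu\nu}$ acting on $\mathrm{span}\{\ket{\mu\nu},\ket{\nu\mu}\}$. Hence $J_{\Lambda_{\epsilon,\gamma}}$ admits a rank-one decomposition adapted to the blocks — a family of vectors for $J$, all lying in $\mathrm{span}\{\ket{\mu\mu}\}$, and for each pair $\mu<\nu$ (at most) two vectors for $J_{\mu\nu}$, lying in $\mathrm{span}\{\ket{\mu\nu},\ket{\nu\mu}\}$ — and the union of all these vectors is a legitimate Kraus set for $\Lambda_{\epsilon,\gamma}$. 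It then remains only to check SIO-ness of the two resulting families of operators.

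The vectors coming from $J$ have the form $\sum_\mu a_\mu\ket{\mu\mu}$, hence correspond to \emph{diagonal} Kraus operators $K=\sum_\mu a_\mu\ketbra{\mu}{\mu}$; these obviously satisfy $K\,\mathbb I\,K^\dagger\subset\mathbb I$ and $K^\dagger\,\mathbb I\,K\subset\mathbb I$. The vectors coming from a block $J_{\mu\nu}$ have the form $b\ket{\mu\nu}+c\ket{\nu\mu}$, hence correspond to the off-diagonal operators $K=b\ketbra{\nu}{\mu}+c\ketbra{\mu}{\nu}$, which are still generalized permutations: $K\ketbra{i}{i}K^\dagger$ equals $|b|^2\ketbra{\nu}{\nu}$ for $i=\mu$, $|c|^2\ketbra{\mu}{\mu}$ for $i=\nu$, and $0$ otherwise, so $K\,\mathbb I\,K^\dagger\subset\mathbb I$, and the identical computation applied to $K^\dagger$ gives $K^\dagger\,\mathbb I\,K\subset\mathbb I$. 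Since all Kraus operators in the constructed set are of one of these two types, $\Lambda_{\epsilon,\gamma}$ is an SIO.

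This argument is not technically demanding; the one point requiring care is the bookkeeping of the Choi/Kraus correspondence together with the observation that makes the statement transparent: the damping term $\gamma\,\ketbra{\mu\nu}{\nu\mu}$ in Eq.(\ref{JCS}) is exactly what populates the blocks $J_{\mu\nu}$ and therefore forces the appearance of \emph{off-diagonal} (though still generalized-permutation) Kraus operators — this is why $(\epsilon,\gamma)$-aQNNs drop from GIO, the error-free case of Section~\ref{ncg} where $\gamma=0$ leaves only diagonal Kraus operators, down to SIO, while no genuinely coherence-activating (MIO-only) structure is introduced. If one wished, the explicit orthonormal eigenvectors of each $J_{\mu\nu}$, namely $\tfrac{1}{\sqrt2}(\ket{\mu\nu}\pm e^{-i\arg\gamma}\ket{\nu\mu})$ with eigenvalues $\tfrac{\epsilon}{N-1}\pm|\gamma|$, give a concrete Kraus set, but any block-respecting decomposition works equally well.
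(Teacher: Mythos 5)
Your proposal is correct and follows essentially the same route as the paper: both exploit the block-diagonal structure of the Choi state $J_{\Lambda_{\epsilon,\gamma}}=J\oplus\bigoplus_{\mu<\nu}J_{\mu\nu}$ to conclude that every Kraus operator is either diagonal (from $J$) or a generalized permutation $b\ketbra{\nu}{\mu}+c\ketbra{\mu}{\nu}$ (from a block $J_{\mu\nu}$), and then verify $K\,\mathbb{I}\,K^{\dagger}\subset\mathbb{I}$ and $K^{\dagger}\,\mathbb{I}\,K\subset\mathbb{I}$ directly. The only cosmetic difference is that the paper works with the specific eigendecomposition of each block, whereas you correctly observe that any block-respecting rank-one decomposition suffices.
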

\begin{proof}
First, we derive the Kraus operators, defined as $K_{i} =
\sqrt{\lambda^{(i)}}~mat(\lambda^{(i)})$, where $\lambda^{(i)}$ is an eigenvalue of the Choi state and $mat(\lambda^{(i)})$ the row-by-row matrix representation of the corresponding eigenvector $\ket{\lambda^{(i)}}$. The diagonalization of $J_{\Lambda_{\epsilon,\gamma}}$ can be made simpler thanks to the direct sum decomposition of Eqs.(\ref{j1})-(\ref{j2}). Notice that the diagonalization of Eq.(\ref{j1}) always yields diagonal Kraus operators. In fact, 
let us consider an eigenvector of the $N \times N$ matrix $J$, say $\ket{\lambda^{(i)}_{J}}=\left( (\lambda^{(i)}_{J})_{00}, (\lambda^{(i)}_{J})_{11}, \dots, (\lambda^{(i)}_{J})_{N-1,N-1} \right)^{T}$. When extending this vector to dimension $N^2$, we need to add $N$ zeroes between each pair of entries, i.e., $\ket{\lambda^{(i)}_{J}}_{ext}=( (\lambda^{(i)}_{J})_{00},\underbrace{0, \dots, 0}_{N},(\lambda^{(i)}_{J})_{11},\dots,(\lambda^{(i)}_{J})_{N-1,N-1} )^{T}$. Hence, when converting the extended eigenvector into a matrix, it is immediate to find that this operation always yields a diagonal Kraus operator, regardless of the particular eigenvector considered. Let us now inspect the eigenvectors of the operator $J_{\mu \nu}$ of Eq.(\ref{j1}). First notice that, for any $0 \leq \mu < \nu \leq N-1$, $J_{\mu \nu}$ can be written in the chosen basis as 
\begin{equation*}
J_{\mu \nu}= \frac{\epsilon}{N-1}\Big(\ketbra{\mu \nu}{\mu \nu}+\ketbra{\nu \mu}{\nu \mu}\Big) +\gamma \ketbra{\mu \nu }{ \nu \mu} + \bar{\gamma} \ketbra{\nu \mu}{\mu \nu}~.
\end{equation*}
 The diagonalization of $J_{\mu \nu}$ yields a couple of eigenvectors of the form $\ket{\lambda^{(\pm)}_{J_{\mu \nu}}}= ( \pm (\lambda_{J_{\mu \nu}})_{0},(\lambda_{J_{\mu \nu}})_{1} )^{T}$. However, differently from the previous case, when extending these vectors to dimension $N^2$, we need to add $N^2-2$ zeroes whose position will vary according to the specific matrix $J_{\mu \nu}$ considered. It is easily found that the zeroes of the extended eigenvector correspond to the elements of the basis of the form $\ket{\mu' \nu'}$ with $\mu',\nu' \neq \mu, \nu$. Thus, the Kraus operators are given by $K_{\mu \nu}=\kappa^{(1)}_{\mu \nu} \ketbra{\mu}{\nu}+ \kappa^{(2)}_{\mu \nu}\ketbra{\nu}{\mu}$~, for some $\kappa^{(i)}_{\mu \nu} \in \mathbb{C}$. For every incoherent state $\delta$ it holds
\begin{equation}
\label{k}
K_{\mu \nu} \delta K^{\dagger}_{\mu \nu} =
|\kappa^{(1)}_{\mu \nu}|^2 \delta_{\nu \nu}\ketbra{\mu}{\mu} +|\kappa^{(2)}_{\mu \nu}|^2 \delta_{\mu \mu} \ketbra{\nu}{\nu}~, 
\end{equation}
and $ K^{\dagger}_{\mu \nu} \delta K_{\mu \nu}$ is obtained by relabelling $\mu \rightarrow \nu$.
Hence, for every $\delta \in \mathbb{I}$, it holds that $K_{\mu \nu} \delta K^{\dagger}_{\mu \nu} \subset \mathbb{I}$, $K^{\dagger}_{\mu \nu} \delta K_{\mu \nu} \subset \mathbb{I}$~,  with $0 \leq \mu < \nu \leq N-1$.
\end{proof}\\

Further, we provide the expression of the distance between the two quantum channels $\Lambda$ and $\Lambda_{\epsilon,\gamma}$.
In order to do so, we introduce the \textit{diamond distance}, denoted by $D_{\diamond}$, which is formally defined, for any pair of CPTP maps, as \cite{watrous13,aharonov1998quantum}
\begin{equation*}
D_{\diamond}(\mathcal{E},\mathcal{F}) = \frac{1}{2} \max_{\rho_{AB}} \norm{(\mbox{id} \otimes \mathcal{E})(\rho_{AB})-(\mbox{id} \otimes \mathcal{F})(\rho_{AB})}_{1}~,
\end{equation*}
\noindent where $\rho_{AB} \in \mathcal{B}(\mathcal{H}_{A} \otimes \mathcal{H}_{B})$ and  $\norm{X}_{1}=\mbox{Tr}{\sqrt{X X^{\dagger}}}$ is the usual trace norm.

Operationally, the diamond distance quantifies how well one can discriminate between two quantum maps. Indeed, it is possible to show that $\mathcal{E}$ and $\mathcal{F}$ become perfectly distinguishable whenever $D_{\diamond}(\mathcal{E},\mathcal{F})=1$ \cite{wilde2011classical}. The computation of the diamond distance between two CPTP maps can be cast as a semidefinite program (SDP) which admits a simple formulation in terms of their Choi states \cite{Diaz2018}, i.e.,
\begin{align}
\label{sdp}
    \min \quad &\lambda\\ \nonumber
    \mbox{s.t.} \quad &Z \geq J_{\mathcal{E}}-J_{\mathcal{F}}\\ \nonumber
    &\lambda \mathds{1}_{A} \geq \mbox{Tr}_{B}(Z)\\ \nonumber
    & Z \geq 0~. \nonumber
\end{align}
Taking into account $\Lambda$ and $\Lambda_{\epsilon,\gamma}$, the solution of the above SDP program does not admit, in general, a simple analytical expression. However, upon suitable conditions, we prove the following result:
\begin{proposition}
\label{prop3}
Let $\alpha_{\mu \nu} = \alpha_{\nu \mu} \equiv \alpha \in \mathbb{R}$ for all $\mu \neq \nu$ and $\gamma = 0$. Then, the diamond distance between $\Lambda$ and $\Lambda_{\epsilon}$ is given by $D_{\diamond}(\Lambda,\Lambda_{\epsilon}) = \epsilon $~.
\end{proposition}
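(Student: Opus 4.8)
The plan is to exploit the fact that, once $\gamma=0$, the maps $\Lambda$ and $\Lambda_{\epsilon}$ act identically on the off-diagonal matrix elements, so their difference is an extremely simple Hermiticity-preserving map. Comparing Eq.~(\ref{lam}) with Eq.~(\ref{lambda}), the terms $(1+\alpha_{\mu\nu})\ketbra{\mu}{\nu}$ cancel in $\Lambda-\Lambda_{\epsilon}$, leaving $(\Lambda-\Lambda_{\epsilon})(\rho)=\epsilon\,\Xi(\rho)$ with $\Xi(\rho):=\sum_{\mu}\rho_{\mu\mu}A_{\mu}$ and $A_{\mu}:=\ketbra{\mu}{\mu}-\tfrac{1}{N-1}\sum_{\nu\neq\mu}\ketbra{\nu}{\nu}$. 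The map $\Xi$ depends only on the diagonal of its argument (and not at all on the values of the $\alpha_{\mu\nu}$), and each $A_{\mu}$ is traceless with eigenvalues $1$ and $-1/(N-1)$ (the latter with multiplicity $N-1$), so $\norm{A_{\mu}}_{1}=2$. By the definition of the diamond distance, $D_{\diamond}(\Lambda,\Lambda_{\epsilon})=\tfrac{\epsilon}{2}\max_{\rho_{AB}}\norm{(\mbox{id}_{A}\otimes\Xi)(\rho_{AB})}_{1}$, so the statement reduces to showing this maximum equals $2$.

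For the upper bound I would argue directly from the definition, keeping an arbitrary --- possibly entangled --- reference system $A$. For any normalized $\rho_{AB}$ one has $(\mbox{id}_{A}\otimes\Xi)(\rho_{AB})=\sum_{\mu}\sigma_{\mu}\otimes A_{\mu}$ with $\sigma_{\mu}:=(\mathds{1}_{A}\otimes\bra{\mu})\rho_{AB}(\mathds{1}_{A}\otimes\ket{\mu})\ge 0$ and $\sum_{\mu}\mbox{Tr}\,\sigma_{\mu}=\mbox{Tr}\,\rho_{AB}=1$; the triangle inequality together with multiplicativity of the trace norm under tensor products then gives $\norm{(\mbox{id}_{A}\otimes\Xi)(\rho_{AB})}_{1}\le\sum_{\mu}\norm{\sigma_{\mu}}_{1}\,\norm{A_{\mu}}_{1}=2$. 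Equivalently, one can feed the SDP~(\ref{sdp}) the feasible point $\lambda=\epsilon$, $Z=\epsilon\sum_{\mu}\dketbra{\mu\mu}$: since $J_{\Lambda}-J_{\Lambda_{\epsilon}}=\epsilon\sum_{\mu}\ketbra{\mu}{\mu}\otimes A_{\mu}$, one finds $Z-(J_{\Lambda}-J_{\Lambda_{\epsilon}})=\tfrac{\epsilon}{N-1}\sum_{\mu}\ketbra{\mu}{\mu}\otimes\big(\sum_{\nu\neq\mu}\ketbra{\nu}{\nu}\big)\ge 0$ and $\mbox{Tr}_{B}\,Z=\epsilon\,\mathds{1}_{A}$. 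Either way, $D_{\diamond}(\Lambda,\Lambda_{\epsilon})\le\epsilon$.

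For the matching lower bound no entanglement is needed: it suffices to run the two maps on a single stationary state. Taking $\rho=\ketbra{0}{0}$ we have $\Lambda(\ketbra{0}{0})=\ketbra{0}{0}$ while $\Lambda_{\epsilon}(\ketbra{0}{0})=(1-\epsilon)\ketbra{0}{0}+\tfrac{\epsilon}{N-1}\sum_{\nu\neq 0}\ketbra{\nu}{\nu}$, so $(\Lambda-\Lambda_{\epsilon})(\ketbra{0}{0})=\epsilon A_{0}$ and $\tfrac12\norm{\epsilon A_{0}}_{1}=\epsilon$; hence $D_{\diamond}(\Lambda,\Lambda_{\epsilon})\ge\epsilon$. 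Combining the two bounds yields $D_{\diamond}(\Lambda,\Lambda_{\epsilon})=\epsilon$.

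The only genuinely non-obvious step is the upper bound: a priori, entangling the input with a reference system could make $\Lambda$ and $\Lambda_{\epsilon}$ easier to distinguish than the bare stationary-state input does, and the role of the $\sum_{\mu}\sigma_{\mu}\otimes A_{\mu}$ decomposition --- equivalently, of the feasibility of the SDP point above --- is precisely to rule this out. Everything else is bookkeeping. It is worth stressing that the generally intractable SDP~(\ref{sdp}) collapses to the closed form $\epsilon$ here exactly because the hypothesis $\gamma=0$ turns $\Lambda-\Lambda_{\epsilon}$ into a diagonal-supported perturbation; this structure, and with it the simple answer, is lost for $\gamma\neq 0$.
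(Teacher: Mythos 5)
Your proof is correct, and it takes a different route from the paper's. The paper stays entirely inside the SDP formulation~(\ref{sdp}): it observes that $J_{\Lambda_{\epsilon,\gamma}}-J_{\Lambda}$ is diagonal, asserts that one may restrict to diagonal feasible $Z$, picks the minimal such $Z$ (with $z_{ii}=\epsilon/(N-1)$ on the positions indexed by $\ket{\mu\nu}$, $\mu\neq\nu$), computes $\mbox{Tr}_B(Z)=\epsilon\,\mathds{1}$, and concludes that the optimal $\lambda$ is $\epsilon$. You instead work directly with the trace-norm definition: the decomposition $(\mbox{id}_A\otimes\Xi)(\rho_{AB})=\sum_{\mu}\sigma_{\mu}\otimes A_{\mu}$ with $\sum_\mu\mbox{Tr}\,\sigma_\mu=1$ and $\norm{A_\mu}_1=2$ gives the upper bound, and the stationary input $\ketbra{0}{0}$ gives the matching lower bound. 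Your version is arguably more self-contained: the paper's step "the SDP can be solved by restricting to diagonal $Z$" is asserted without justification (it does hold, since dephasing any feasible $Z$ in the product basis preserves feasibility when the constraint matrix is diagonal, but this is not spelled out), and as written the paper only exhibits one feasible point, which by itself certifies $D_\diamond\leq\epsilon$ rather than equality. Your explicit distinguishing state supplies the lower bound that the paper leaves implicit, and your remark that no entangled reference is needed to saturate the distance is a small bonus the paper does not make. Your alternative SDP feasible point $Z=\epsilon\sum_\mu\dketbra{\mu\mu}$ is also correct (it is the mirror image of the paper's choice, corresponding to the opposite ordering of the two Choi matrices, which is immaterial since the diamond distance is symmetric).
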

\begin{proof}
Notice that, since the difference between the Choi states yields a diagonal matrix, the SDP program (\ref{sdp}) can be solved by restricting to the diagonal matrices $Z= \mbox{diag}(z_{00}, \dots, z_{N-1,N-1})$ satisfying the constraints $Z \geq J_{\Lambda_{\epsilon,\gamma}}-J_{\Lambda}$ and $\lambda \mathds{1}_{A} \geq \mbox{Tr}_{B}(Z)$. The former condition is easily satisfied by choosing $z_{ii}=\frac{\epsilon}{N-1}$ whenever $\left(J_{\Lambda_{\epsilon,\gamma}}-J_{\Lambda}\right)_{ii}=\frac{\epsilon}{N-1}$ and $z_{ii}=0$ elsewhere. With this choice, we find $\mbox{Tr}_{B}(Z) =\epsilon \mathds{1}$, so that the latter condition reduces to $\lambda \geq \epsilon$. Hence, the minimization over $\lambda$ yields $\epsilon$, which completes the proof.
\end{proof}

Notice that, when restricting to the case of Proposition \ref{prop3}, for $\epsilon = 1$ it is possible to fully discriminate between $\Lambda$ and $\Lambda_{\epsilon}$. Moreover, we have numerically found that, also when $\gamma \neq 0$, Proposition \ref{prop3} holds true, thus implying that the diamond distance is independent of the choice of $\gamma$.

Regarding the physical realization of $(\epsilon,\gamma)$-aQNNs, the following result holds:
\begin{proposition}
\label{prop5}
The Stinespring dilation of an $N$-dimensional $(\epsilon,\gamma)$-aQNN is given by
\begin{equation}
 U_{(\epsilon,\gamma)\text{-aQNN}}= \sum_{\mu}\sum_{k} c_{\mu}^{(k)} \ketbra{\pi_{k}(\mu)}{\mu} \otimes \ketbra{a_n}{a_0}~,
\end{equation}
where $\pi_{k}$ is a permutation function swapping two states $\ket{\mu}$ and $\ket{\nu}~ \forall \mu \neq \nu$, i.e., $\ket{\pi_{k}(\mu)}=\ket{\nu}$, $\ket{\pi_{k}(\nu)}=\ket{\mu}$, and $\{\ket{c_{\mu}^{(k)}}\}_{\mu =0}^{N-1}$ is a set of normalized states fulfilling
\begin{equation}
\begin{split}
&|c_{\mu}^{(0)}|^2 = 1-\epsilon~,\quad  c_{\mu}^{(0)} \bar{c}_{\nu}^{(0)} = 1+\alpha_{\mu \nu}~,\forall \mu \neq \nu~,\\
&c_{\mu}^{(k)}\bar{c}^{(k)}_{\nu}=\gamma,~ \forall k \neq 0~.
\end{split}
\end{equation}
\end{proposition}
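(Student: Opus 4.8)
The plan is to mirror the structure of the proof of the earlier Proposition giving the Stinespring dilation of error-free aQNNs, but now accounting for the two sources of error. First I would recall from the Lemma that $(\epsilon,\gamma)$-aQNNs are SIOs, and more specifically that the Choi state $J_{\Lambda_{\epsilon,\gamma}}$ decomposes as the direct sum $J \bigoplus_{\mu<\nu} J_{\mu\nu}$ of Eqs.(\ref{j1})--(\ref{j2}). The key observation is that this block structure translates directly into the Kraus operators: from the diagonalization of $J$ one obtains diagonal Kraus operators $K_0$ (these carry the information about $1-\epsilon$ and the $1+\alpha_{\mu\nu}$), while from each $2\times 2$ block $J_{\mu\nu}$ one obtains Kraus operators of the transposition type $K_{\mu\nu}=\kappa^{(1)}_{\mu\nu}\ketbra{\mu}{\nu}+\kappa^{(2)}_{\mu\nu}\ketbra{\nu}{\mu}$ (these carry the information about $\gamma$ and $\epsilon/(N-1)$).

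Next I would invoke the general Stinespring recipe: given a Kraus decomposition $\mathcal{E}(\cdot)=\sum_k K_k\cdot K_k^\dagger$, the dilation acts as $U(\ket{\psi}\otimes\ket{a_0})=\sum_k (K_k\ket{\psi})\otimes\ket{a_k}$ for an orthonormal ancilla basis $\{\ket{a_k}\}$. So I would write $U_{(\epsilon,\gamma)\text{-aQNN}}=\sum_k \big(\sum_{\mu\nu}(K_k)_{\nu\mu}\ketbra{\nu}{\mu}\big)\otimes\ketbra{a_k}{a_0}$ and then rewrite the matrix elements of the diagonal and transposition-type Kraus operators in the compact permutation form $\sum_\mu c_\mu^{(k)}\ketbra{\pi_k(\mu)}{\mu}$, where $\pi_0$ is the identity permutation and each $\pi_k$ with $k\neq 0$ is the transposition of some pair $(\mu,\nu)$. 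The constraints on the coefficients then follow by matching: requiring that this $U$ reproduces the map of Eq.(\ref{lambda}) forces $\sum_k c_\mu^{(k)}\bar{c}_\mu^{(k)}$ to equal the diagonal damping, $|c_\mu^{(0)}|^2=1-\epsilon$; the off-diagonal coefficient from the $k=0$ sector gives $c_\mu^{(0)}\bar c_\nu^{(0)}=1+\alpha_{\mu\nu}$; and the transposition sectors contribute $c_\mu^{(k)}\bar c_\nu^{(k)}=\gamma$ for $k\neq 0$, while their diagonal contributions account for $\epsilon/(N-1)$. I would also note that the normalization condition $\sum_k |c_\mu^{(k)}|^2=1$ (which makes $U$ an isometry) is consistent with $(1-\epsilon)+(N-1)\cdot\frac{\epsilon}{N-1}=1$, i.e. with trace-preservation, and that $U^\dagger U=\one$ follows from $\sum_k K_k^\dagger K_k=\one$.

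Finally I would check that $\mathrm{Tr}_{\mathcal A}\big[U(\rho\otimes\ketbra{a_0}{a_0})U^\dagger\big]$ indeed reproduces Eq.(\ref{lambda}), by expanding $\rho=\sum_{\mu\nu}\rho_{\mu\nu}\ketbra{\mu}{\nu}$, applying the proposed $U$, and tracing out the ancilla — the cross terms between different ancilla labels vanish, and the surviving terms assemble into the diagonal part with weights $1-\epsilon$ and $\epsilon/(N-1)$ and the off-diagonal part with weights $1+\alpha_{\mu\nu}$ and $\gamma$, exactly as in the definition of $\Lambda_{\epsilon,\gamma}$. The main obstacle I anticipate is bookkeeping rather than conceptual: one must carefully track which ancilla label $\ket{a_k}$ is attached to which Kraus operator and verify that the permutations $\pi_k$ can be chosen so that the transposition-type Kraus operators from the blocks $J_{\mu\nu}$ are correctly indexed — in particular confirming there are enough ancilla dimensions (one for $k=0$ plus one per unordered pair $(\mu,\nu)$) and that the labelling $\ket{a_n}$ in the statement is just shorthand for this enumeration. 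Modulo this indexing care, the computation is a direct generalization of the error-free case.
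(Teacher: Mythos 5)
Your proposal is correct and follows essentially the same route as the paper: both reduce to writing the dilation in the permutation form $\sum_{\mu,k}c_\mu^{(k)}\ketbra{\pi_k(\mu)}{\mu}\otimes\ketbra{a_k}{a_0}$, inserting it into Eq.(\ref{stine}), and matching coefficients against Eq.(\ref{lambda}), with the identity permutation $k=0$ giving the $1-\epsilon$ and $1+\alpha_{\mu\nu}$ conditions and the transposition sectors giving $\gamma$ and $\epsilon/(N-1)$. The only (inessential) difference is that you justify the permutation structure by assembling the dilation from the Kraus operators obtained in the Lemma via the Choi-block decomposition, whereas the paper directly cites the general form of the Stinespring dilation of a SIO.
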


\begin{proof}
The Stinespring dilation of a SIO is given by \cite{chitambar2016}
\begin{equation}
\label{stineSIO}
 U_{\text{SIO}}= \sum_{\mu}\sum_{k} c_{\mu}^{(k)} \ketbra{\pi_{k}(\mu)}{\mu} \otimes \ketbra{a_n}{a_0}~,
\end{equation}
\noindent where $\pi_{k}$ is a permutation function labelled by the index $k$ and the coefficients $\{c_{\mu}^{(k)}\}$ are such that the vector $\ket{c_{\mu}}=(c_{\mu}^{(0)}, \cdots, c_{\mu}^{(r)})$ is normalized.
Inserting Eq.(\ref{stineSIO}) in Eq.(\ref{stine}) it is easily found that
\begin{equation}
\mathcal{E}(\rho)= \sum_{\mu \nu}\sum_{k} c_{\mu}^{(k)} \bar{c}_{\nu}^{(k)} \rho_{\mu \nu}\ketbra{\pi_{k}(\mu)}{\pi_{k}(\nu)}~.
\end{equation}
In order to relate the above expression with the one of Eq.(\ref{lambda}), we rewrite it as
\begin{equation}
\label{ph}
\mathcal{E}(\rho)=  \sum_{\mu} \sum_{k} \rho_{\mu \mu } |c_{\mu}^{(k)}|^2 \ketbra{\pi_{k}(\mu)}{\pi_{k}(\mu)} + \sum_{\mu \neq \nu} \sum_{k} \rho_{\mu \nu} c_{\mu}^{(k)} \bar{c}_{\nu}^{(k)} \ketbra{\pi_{k}(\mu)}{\pi_{k}(\nu)}~.
\end{equation}
Let us now denote by $k=0$ the identical permutation that leaves unchanged the elements of the chosen basis, i.e., $\ket{\pi_{0}(\mu)}=\ket{\mu}$ for all $\mu=0, \dots, N-1$.
Hence, the first term of Eq.(\ref{ph}) can be cast as
\begin{equation}
\label{ph2}
\sum_{\mu} \rho_{\mu \mu } |c_{\mu}^{(0)}|^2 \ketbra{\mu}{\mu} +
\sum_{\mu} \sum_{k \neq 0} \rho_{\mu \mu} c_{\mu}^{(k)} \bar{c}_{\nu}^{(k)} \ketbra{\pi_{k}(\mu)}{\pi_{k}(\nu)}~.
\end{equation}
Comparing Eq.(\ref{ph2}) with the diagonal terms in Eq.(\ref{lambda}), we find
\begin{equation*}
|c_{\mu}^{(0)}|^2 = 1-\epsilon~, \quad  c_{\mu}^{(0)} \bar{c}_{\nu}^{(0)} = 1+\alpha_{\mu \nu}~, ~\forall \mu \neq \nu~.
\end{equation*}
To find the rest of the conditions let us rewrite the second term of Eq.(\ref{ph}) as
\begin{equation}
\label{ph3}
\sum_{\mu \neq \nu} \rho_{\mu \nu} c_{\mu}^{(0)} \bar{c}_{\nu}^{(0)} \ketbra{\mu}{\nu} + \sum_{\mu \neq \nu} \sum_{k \neq 0} \rho_{\mu \nu} c_{\mu}^{(k)} \bar{c}_{\nu}^{(k)} \ketbra{\pi_{k}(\mu)}{\pi_{k}(\nu)}~.
\end{equation}
A direct comparison between Eq.(\ref{ph3}) and the off-diagonal terms of Eq.(\ref{lambda}), shows that we need to impose some restrictions on the permutation function. Choosing $\pi_{k}$ to be a swap between any two pair of orthogonal states, i.e., $\ket{\pi_{k}(\mu)}=\ket{\nu}$ and $\ket{\pi_{k}(\nu)}=\ket{\mu}$ with $\mu \neq \nu$, we find:
\begin{equation*}
|c_{\mu}^{(k)}|^2 = \epsilon/(N-1)~, \quad  c_{\mu}^{(k)} \bar{c}_{\nu}^{(k)} = \gamma~, ~\forall k \neq 0~.
\end{equation*}
\end{proof}

\noindent As a consequence of Proposition \ref{prop5}, an error-free aQNN may turn faulty if the unitary operator that physically implements it degrades from $U_{\text{aQNN}}$ to $U_{(\epsilon,\gamma)\text{-aQNN}}$.

We conclude this section by observing that also SIOs are non-coherence-activating operations \cite{liu2017}, which results in the following no-go proposition:
\begin{proposition}
Coherence cannot be used to enhance the performance of $(\epsilon,\gamma)$-aQNNs.
\end{proposition}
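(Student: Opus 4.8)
The plan is to reproduce almost verbatim the argument used for the error-free no-go result in Section \ref{nogo}, replacing the class GIO by the class SIO. Only two ingredients are needed: (i) that $(\epsilon,\gamma)$-aQNNs are SIOs, which is exactly the content of the Lemma above; and (ii) that SIOs are non-coherence-activating, i.e.\ $\Delta \circ \Lambda_{\epsilon,\gamma} \circ \Delta = \Delta \circ \Lambda_{\epsilon,\gamma}$, a fact established in \cite{liu2017}.

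First I would make ingredient (ii) explicit for the map at hand. Inspecting Eq.(\ref{lambda}), every off-diagonal input element $\rho_{\mu\nu}$, $\mu \neq \nu$, is sent only to off-diagonal output positions $\ketbra{\mu}{\nu}$ and $\ketbra{\nu}{\mu}$, while every diagonal input element is sent only to diagonal output positions. Hence $\Delta(\Lambda_{\epsilon,\gamma}(\rho))$ depends solely on $\Delta(\rho)$, which is precisely the non-activation condition $\Delta \circ \Lambda_{\epsilon,\gamma} \circ \Delta = \Delta \circ \Lambda_{\epsilon,\gamma}$. The same holds for the dilated channel acting on $\mathcal{B}(\mathcal{H} \otimes \mathcal{A})$ in the enhancement protocol, since tensoring with, and later tracing out, an ancilla cannot turn a non-activating map into an activating one.

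Then I would invoke the criterion of \cite{ben_dana_erratum}, exactly as in the proof of the analogous statement for error-free aQNNs: a non-coherence-generating operation $\mathcal{M}$ can implement a coherent target channel via the protocol ``append a coherent ancilla $\omega_i$, apply $\mathcal{M}$ on the composite system, discard the ancilla'' only if $\mathcal{M}$ activates coherence. Since $\Lambda_{\epsilon,\gamma}$ fails this condition, the coherence carried by $\omega_i$ cannot contribute to the realization of $\mathcal{N}_i$: the induced dynamics on $\mathcal{B}(\mathcal{H})$ coincides with what one would obtain had $\omega_i$ been dephased beforehand, so neither the accuracy nor the depth of the $(\epsilon,\gamma)$-aQNN can be improved by supplying coherent resources. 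This yields the claimed no-go statement.

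The step I expect to be the main obstacle is being precise about the second part: one must check that the activation-based criterion of \cite{ben_dana_erratum,liu2017} is stated with enough generality to cover SIOs and the composite-system map, and that ``enhancing performance'' is faithfully identified with ``realizing a coherent channel through the ancilla protocol'', matching the conventions already fixed in Section \ref{nogo}.
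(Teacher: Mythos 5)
Your proposal is correct and follows essentially the same route as the paper, which likewise derives the no-go result from the Lemma that $(\epsilon,\gamma)$-aQNNs are SIOs together with the fact that SIOs are non-coherence-activating \cite{liu2017}, then invokes the criterion of \cite{ben_dana_erratum} exactly as in the error-free case. Your explicit check that $\Delta \circ \Lambda_{\epsilon,\gamma} \circ \Delta = \Delta \circ \Lambda_{\epsilon,\gamma}$ by inspecting the block structure of Eq.~(\ref{lambda}) is a correct, self-contained verification that the paper leaves to the citation.
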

In addition, entanglement cannot be exploited either to accelerate the inference process in this case:

\begin{proposition}
Initial entanglement cannot be used to reduce the depth of \mbox{$(\epsilon,\gamma)$-aQNNs}.
\end{proposition}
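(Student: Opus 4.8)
The plan is to mimic, in the presence of the error parameters $\epsilon$ and $\gamma$, the argument used for the error-free case (Proposition stating "Initial entanglement cannot be used to reduce the depth of aQNNs"). The key observation is that $\Lambda_{\epsilon,\gamma}$ acts entrywise on the density matrix in the fixed basis $\{\ket{\mu}\}$: diagonal entries are mixed among themselves and off-diagonal entries $\rho_{\mu\nu}$ are simply rescaled (by $1+\alpha_{\mu\nu}$, plus a $\gamma$-contribution to $\rho_{\nu\mu}$), with coefficients that do not depend on $\rho$ itself. Consequently, when we tensor with the identity on an ancilla and partially trace, the cross terms carried by the ancilla factor out of the linear action of $\Lambda_{\epsilon,\gamma}$ and recombine to reproduce $\Lambda_{\epsilon,\gamma}(\rho)$ exactly.

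Concretely, I would take a generically mixed input $\rho=\sum_k p_k\ketbra{\phi_k}{\phi_k}$ with purification $\ket{\psi}=\sum_k\sqrt{p_k}\ket{\phi_k}\ket{k}$ for an orthonormal ancilla basis $\{\ket{k}\}$, and write $\rho=\sum_k p_k\sum_{\mu\nu}c_\mu^{(k)}\bar c_\nu^{(k)}\ketbra{\mu}{\nu}$ with $c_\xi^{(k)}=\braket{\xi}{\phi_k}$, exactly as in the noiseless proof. Then I apply $\Lambda_{\epsilon,\gamma}\otimes\mathrm{id}$ to $\psi=\ketbra{\psi}{\psi}$, use linearity to push $\Lambda_{\epsilon,\gamma}$ through the sum over $k$ (the ancilla registers $\ket{k}\bra{k'}$ are untouched), and trace out the ancilla, which kills all $k\neq k'$ terms. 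What survives is $\sum_k p_k\,\Lambda_{\epsilon,\gamma}(\ketbra{\phi_k}{\phi_k})=\Lambda_{\epsilon,\gamma}(\rho)$, using precisely the entrywise form of Eq.~(\ref{lambda}): the diagonal part reshuffles the weights $|c_\mu^{(k)}|^2$ with the $(1-\epsilon)$ and $\epsilon/(N-1)$ coefficients, and the off-diagonal part rescales $c_\mu^{(k)}\bar c_\nu^{(k)}$ by $(1+\alpha_{\mu\nu})$ and $\gamma$. Hence $\mathrm{Tr}_{\cal A}\{(\Lambda_{\epsilon,\gamma}\otimes\mathrm{id})(\psi)\}=\Lambda_{\epsilon,\gamma}(\rho)$, so $C(\Lambda_{\epsilon,\gamma}(\rho))=C(\mathrm{Tr}_{\cal A}\{(\Lambda_{\epsilon,\gamma}\otimes\mathrm{id})(\psi)\})$ for any coherence measure $C$, and no purification can produce stronger decoherence than feeding $\rho$ directly.

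I do not expect a genuine obstacle here; the only point requiring a little care is bookkeeping the index structure so that the $\gamma$-term (which sends weight from position $(\mu,\nu)$ to $(\nu,\mu)$) is correctly matched on both sides — but since that map, too, is independent of $\rho$ and linear, it commutes with the ancilla partial trace in the same way. A remark I would add is that this is a structural fact: any channel that is "entrywise" (a Schur-type multiplier composed with a classical stochastic map on the diagonal) in the incoherent basis is immune to purification-assisted boosting of decoherence, and both $\Lambda$ and $\Lambda_{\epsilon,\gamma}$ are of this type, which is why the noiseless proof carries over essentially verbatim.

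\begin{proof}
Consider a generically mixed input state $\rho=\sum_{k} p_k \ketbra{\phi_k}{\phi_k}$, with $p_k\in[0,1]$ and the $\ket{\phi_k}$ not necessarily orthonormal, purified by $\ket{\psi}=\sum_{k} \sqrt{p_k}\ket{\phi_k}\ket{k}$, where $\{\ket{k}\}$ is an orthonormal basis of $\mathcal{A}$. Writing $\ket{\phi_k}=\sum_\mu c_\mu^{(k)}\ket{\mu}$ with $c_\xi^{(k)}=\braket{\xi}{\phi_k}$, we have $\rho=\sum_k p_k\sum_{\mu\nu}c_\mu^{(k)}\bar c_\nu^{(k)}\ketbra{\mu}{\nu}$. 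Applying $\Lambda_{\epsilon,\gamma}\otimes\text{id}$ to $\psi:=\ketbra{\psi}{\psi}$ and using the entrywise form of Eq.(\ref{lambda}) together with linearity, the action of $\Lambda_{\epsilon,\gamma}$ commutes with the ancilla registers $\ketbra{k}{k'}$; tracing out $\mathcal{A}$ then annihilates all $k\neq k'$ terms, leaving
\begin{eqnarray}
\mbox{Tr}_{\cal A}\{(\Lambda_{\epsilon,\gamma} \otimes \text{id})(\psi)\}&=&\sum_k p_k \bigg[\sum_\mu |c_\mu^{(k)}|^2 \Big((1-\epsilon)\ketbra{\mu}{\mu}+\underset{(\mu<\nu)}{\sum_{\nu=1}^{N-1}}\frac{\epsilon}{N-1}\ketbra{\nu}{\nu}\Big) \nonumber\\
&&{}+\sum_{\mu<\nu} \Big\{ c_\mu^{(k)}\bar c_\nu^{(k)}\big[(1+\alpha_{\mu\nu})\ketbra{\mu}{\nu}+\gamma\ketbra{\nu}{\mu}\big]+\text{h.c.}\Big\}\bigg]\nonumber\\
&=&\Lambda_{\epsilon,\gamma}(\rho)~.
\end{eqnarray}
Therefore $C(\Lambda_{\epsilon,\gamma}(\rho))=C(\mbox{Tr}_{\cal A}\{(\Lambda_{\epsilon,\gamma}\otimes\text{id})(\psi)\})$ for every coherence measure $C$, and initial entanglement cannot produce a faster decoherence in the input state $\rho$.
\end{proof}
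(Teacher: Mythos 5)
Your proof is correct and follows essentially the same route as the paper's: purify the mixed input, expand in the incoherent basis, apply $\Lambda_{\epsilon,\gamma}\otimes\mathrm{id}$, and observe that tracing out the ancilla kills the $k\neq k'$ cross terms so that the output coincides with $\Lambda_{\epsilon,\gamma}(\rho)$. The final displayed identity matches the paper's computation almost verbatim, so there is nothing to add.
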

\begin{proof}
Consider a generally mixed input state $\rho=\sum_{k} p_k \ketbra{\phi_k}{\phi_k}$, where $p_k\in [0,1]$ and the states $\ket{\phi_k}$ are not necessarily orthonormal. A purification of $\rho$ is given by $\ket{\psi}=\sum_{k} \sqrt{p_k} \ket{\phi_k}\ket{k}$, where $\{\ket{k}\}$ is an orthonormal basis of $\mathcal{A}$. Expressing $\rho$ in this basis, i.e., $\rho= \sum_k p_k \sum_{\mu \nu} c_\mu^{(k)}\bar{c}_\nu^{(k)}\ketbra{\mu}{\nu}$, with $c_\xi^{(k)}=\braket{\xi}{\phi_k}$, leads to
\begin{eqnarray}
&\mbox{Tr}_{\cal A}&\{(\Lambda_{\epsilon,\gamma} \otimes \text{id})(\psi)\}=\sum_k p_k \bigg[\sum_\mu |c_\mu^{(k)}|^2[(1-\epsilon)\ketbra{\mu}{\mu} + \sum_{\mu < \nu}\frac{\epsilon}{N-1}\ketbra{\nu}{\nu}\big]\nonumber \\
&+& \sum_{\mu < \nu}\bigg\{ c_\mu^{(k)}\bar{c}_\nu^{(k)}\big[(1+\alpha_{\mu \nu})\ketbra{\mu}{\nu} + \gamma\ketbra{\nu}{\mu}\big]+\text{h.c.} \bigg\} \bigg]= \Lambda_{\epsilon,\gamma}(\rho).
\end{eqnarray}
Therefore, $C(\Lambda_{\epsilon,\gamma}(\rho))=C(\mbox{Tr}_{\cal A}\{(\Lambda_{\epsilon,\gamma}\otimes \text{id})(\psi)\}$ and initial correlations cannot produce a faster decoherence in the input state $\rho$.
\end{proof}

So far, we have considered the case when faulty aQNNs are described by SIOs, showing that neither coherence nor entanglement can be used to enhance the performance of the associated aQNN. Nevertheless, it is possible to show that, when other sources of error are considered, this is not necessarily the case. In particular, we define a map $\Lambda_{\epsilon,\gamma,\lambda}$ defined as:
\begin{equation}
    \Lambda_{\epsilon,\gamma,\lambda}(\rho) = \Lambda_{\epsilon,\gamma}(\rho) + \sum_{\mu<\nu}  \left[\rho_{\mu \nu} \lambda \ketbra{\mu+1}{\nu+1} + \mbox{h.c.}  \right]~,
\end{equation}
\noindent where $\lambda \in \mathbb{C}$ and $\Lambda_{\epsilon,\gamma}(\rho)$ is the map of Eq.(\ref{lambda}).

\noindent It can be checked numerically that $\Lambda_{\epsilon,\gamma,\lambda}$ corresponds to a MIO, but not IO, thus possibly allowing the use of coherence to enhance the performance of the related aQNN, as proven in \cite{Diaz2018}.

\section{Discussion}

\noindent In this work we have shown the usefulness of coherence theory in the characterization of aQNNs of the Hopfield type. Such networks are always described by quantum channels that do not generate coherence. In the case of error-free aQNNs, the associated CPTP maps, $\Lambda$, correspond to genuinely incoherent operations (GIOs), and the network retrieves the stationary state (attractor) which minimizes its relative entropy with the input state. Further, using the concept of decohering power of a channel, we have provided the analytical expression of the depth of the network, that is, the number of layers required to complete a classification task. For error-free aQNNs, coherence theory shows that neither coherence nor entanglement can act as catalysts to improve their performance. Finally, we have studied the effect of errors in these networks. We have found that the associated quantum channels, $\Lambda_{\epsilon, \gamma,\lambda}$, correspond either to strictly incoherent operations (SIOs), or to maximally incoherent operations (MIOs). While GIOs and SIOs do not allow the use of external resources to improve the performance of the related aQNNs, in the case of MIOs an enhancement could be obtained by using an external source of coherence. As such, we believe that our coherence theoretic analysis of aQNNs represents a valuable tool to address the most relevant questions regarding the performance of quantum neural networks. Such novel perspective should also be considered when inspecting other classes of more complex quantum neural networks, where this approach could bring key insights in the characterization of their performance.

\section*{Acknowledgements}
\noindent We thank John Calsamiglia for useful discussions.
We acknowledge financial support from
the Spanish Agencia Estatal de Investigación, PID2019-
107609GB-I00, from Secretaria d’Universitats i Recerca
del Departament d’Empresa i Coneixement de la Generalitat de Catalunya, co-funded by the European Union
Regional Development Fund within the ERDF Operational Program of Catalunya (project QuantumCat, ref.
001-P-001644), and Generalitat de Catalunya CIRIT
2017-SGR- 1127.

\bibliographystyle{unsrtnat}
\bibliography{main21junio}
\end{document}